\documentclass[a4paper]{article}
\usepackage[all]{xy}\usepackage[latin1]{inputenc}        %accents
\usepackage[dvips]{graphics,graphicx}
\usepackage{amsfonts,amssymb,amsmath,color,mathrsfs, amstext}
\usepackage{amsbsy, amsopn, amscd, amsxtra, amsthm,authblk}
\usepackage{enumerate,algorithmicx,algorithm}
\usepackage{algpseudocode}
\usepackage{upref}
\usepackage{geometry}
\geometry{left=3.5cm,right=3.5cm,top=3cm,bottom=3cm}
\usepackage[displaymath]{lineno}
%\linenumbers
\usepackage{float}
\usepackage{yhmath}
\usepackage{booktabs}
\usepackage{bm}
\usepackage{multirow}
\usepackage{makecell}
\usepackage{appendix}
\usepackage[colorlinks,
linkcolor=red,
anchorcolor=red,
citecolor=red
]{hyperref}

\numberwithin{equation}{section}

\DeclareMathOperator{\erf}{erf}
\DeclareMathOperator{\erfc}{erfc}

\newtheorem{theorem}{Theorem}
\newtheorem{lemma}{Lemma}

\newtheorem{remark}{Remark}

\begin{document}

\title{Random batch sum-of-Gaussians method for molecular dynamics simulations of particle systems}

\author[1]{Jiuyang Liang}
\author[1,2]{Zhenli Xu}
\author[1]{Qi Zhou}
\affil[1]{School of Mathematical Sciences, Shanghai Jiao Tong University, Shanghai, 200240, P. R. China}
\affil[2]{MOE-LSC, CMA-Shanghai and Shanghai Center for Applied Mathematics, Shanghai Jiao Tong University, Shanghai 200240, China}

\date{}

\maketitle

\begin{abstract}
We develop an accurate, highly efficient and scalable random batch sum-of-Gaussians (RBSOG) method for molecular dynamics simulations of systems with long-range interactions. The idea of the RBSOG method is based on a sum-of-Gaussians decomposition of the Coulomb kernel, and then a random batch importance sampling on the Fourier space is employed for approximating the summation of the Fourier expansion of the Gaussians with large bandwidths (the long-range components). The importance sampling significantly reduces the computational cost, resulting in a scalable algorithm by avoiding the use of communication-intensive fast Fourier transform. Theoretical analysis is present to demonstrate the unbiasedness of the approximate force, the controllability of variance and the weak convergence of the algorithm. The resulting method has $\mathcal{O}(N)$ complexity with low communication latency.   Accurate simulation results on both dynamical  and equilibrium properties of benchmark problems are reported to illustrate the attractive performance of the method. Simulations on parallel computing are also performed to show the high parallel efficiency. The RBSOG method can be straightforwardly extended to more general interactions with long ranged kernels, and thus is promising to construct fast algorithms of a series of molecular dynamics methods for various interacting kernels.  
   
\end{abstract}

\vspace{0.5cm}
\noindent {\bf Keywords:} Molecular dynamics simulations, Electrostatic interactions, Sum of Gaussians, Importance sampling.

%\pacs{ %34.35.+a, 	%Interactions of atoms and molecules with surfaces
%02.70.-c, %	Computational techniques; simulations
%87.16.A-, %Theory, modeling, and computer simulation
%83.10.Rs  %	Computer simulation of molecular and particle dynamics
%}
{\bf AMS subject classifications} 
82M37; %omputational molecular dynamics in statistical mechanics
65C35; %Stochastic particle methods 
65T50; %Numerical methods for discrete and fast Fourier transforms

%%%% maketitle %%%%%
\maketitle
%
% %%%% Start %%%%%%
\section{Introduction}
Molecular dynamics (MD) simulation has become one of the most popular tools for computational study for properties of nano/micro scale systems in various areas. MD furnishes kinetic and thermodynamic quantities of physical systems by the ensemble average of particle configurations produced by the integration of the Newton's equations for each particle which interacts with all other particles \cite{Allen2017ComputerLiquids,Frenkel2001Understanding}. One of the bottleneck problems in MD is the low efficiency of evaluating electrostatic forces between charged particles due to the long-range nature. Meanwhile, as the 3D domain decomposition has achieved great progresses in massively high-performance computing, the long-range nature \cite{David2011Nanoscale,French2010Rev} of the Coulomb interaction  leads to communication latency that significantly reducing the parallel efficiency.

Enormous efforts have been devoted to develop fast electrostatic solvers of MD, mostly based on the classical Ewald method \cite{Ewald1921AnnPhys} where the $1/r$ Coulomb kernel is decomposed into short-range (or near) and a smooth long-range (or far) parts. The long-range part is treated in the Fourier space, which can be effectively calculated by the lattice sums with the fast Fourier transform (FFT) acceleration, achieving an $\mathcal{O}(N\log N)$) complexity \cite{Darden1993JCP,Hockney1988Computer,essmann1995smooth}.
There are also other fast algorithms such as the treecode algorithm \cite{Barnes1986Nature}, the fast multipole method \cite{greengard1987fast}, the multigrid method \cite{trottenberg2000multigrid}, the Maxwell-equation molecular dynamics \cite{maggs2002local} and the sum-of-Gaussians (SOG)-based u-series method \cite{DEShaw2020JCP}, having been developed for MD simulations.
It is remarked that these algorithms achieve $\mathcal{O}(N)$ or $\mathcal{O}(N\log N)$ scaling, but confront low parallel scalability for large-scale simulations \cite{Arnold2013PRE} due to the intensive global communications. As an example, 
the FFT-based Ewald algorithms typically require six sequential communication rounds and expensive data-reshapes \cite{ayala2021scalability}. 
Recently, 
the random-batch Ewald (RBE) method \cite{Jin2020SISC,liang2021random} has been proposed as an alternative linear-scaling algorithm.  
The RBE avoids the use of the FFT by employing a random importance sampling from the Fourier space in calculating the long-range part of the Ewald sums.
The resulted $\mathcal{O}(N)$ cost is mathematically optimal among the Ewald-type algorithms. Besides, the stochastic nature of the RBE well addresses the scalability issue, and  excellent performance is shown for benchmark simulations of systems with size up to $10^8$ atoms \cite{liang2022superscalability}.

It is noticed that the SOG is another splitting technique to decompose the $1/r$ kernel into short-range  and long-range   parts, other than the Ewald splitting. The SOG approximates the kernel by a sum of Gaussians and the interaction ranges depend on the bandwidths of the Gaussians. The SOG approximation can be employed for many different kernels, and the method is often used in convolution integrals and kernel summations  \cite{greengard2018anisotropic,cheng1999fast,jiang2008efficient}. The SOG for the Coulomb kernel can be obtained by bilateral series approximation  \cite{beylkin2005approximation,beylkin2010approximation}, and based on it Predescu {\it et al.} \cite{DEShaw2020JCP} developed the u-series method for MD simulations, which reduces the communication cost of FFT in comparison to the Ewald-type lattice summation at the expense
of some overhead in communication bandwidth and computation, and has been implemented on the  Anton 3 supercomputer \cite{shaw2021anton}. We remark that the SOG can be potentially extended to systems with more complicated long-range interactions because of its nice feature in constructing kernel decomposition for general kernels. 
 
In this paper, we present a novel method for the calculation of long-range interactions in MD simulations, which is an extension of the RBE but with an SOG approximation for the interaction kernel. The so-called random batch SOG (RBSOG) method not only remains the superscalability of the RBE \cite{liang2022superscalability}, but also has the advantage for the applicability to more general kernels. The SOG results in a continuously differentiable approximation at the cutoff, and thus the magnitude of the truncation error is significantly decreased in comparison to the Ewald decomposition. We systematically analyze the correction on the zero-frequency mode term and the truncation error of the SOG decomposition. In the RBSOG, the near part is directly evaluated, whereas the long-range part is approximated by introducing the so-called random mini-batch sampling \cite{Jin2020SISC,jin2020random} on the Fourier space such that the use of the communication-intensive FFT is avoided, similar to the RBE. The resulting algorithm reduces the computational complexity to $\mathcal{O}(N)$, and the communication cost between cores
is also significantly reduced. Our simulations on all-atom bulk water and ionic liquid systems reveal that the spatiotemporal information for these systems on a broad range of time and length scales as well as the thermodynamical quantities are quantitatively reproduced by the RBSOG-based MD.

For another practical aspect, there is typically a substantial gap between algorithm and implementation for highly efficient computer simulations \cite{leiserson2020there}. Part of the problem is that yet tailoring an algorithm to a modern computer requires an
understanding of advanced technologies, such as parallelism, vectorization, memory caching, and saving floating-point operations. To overcome this gap, we present an optimized implementation strategy. A core-shell structure is designed for efficiently tabulating the short-range part combining with partial series expansion. A parallel sampling procedure based on message passing interface (MPI) is also developed for the long-range part vectorized via the AVX512 instructions. We implement our code into a modified version of the Large-scale Atomic/Molecular Massively Parallel Simulator (LAMMPS) \cite{plimpton1995fast,thompson2022lammps}, which is one of the mainstream MD package, by achieving the 3D domain decomposition framework \cite{thompson2022lammps}. Our numerical experiments obtain expected linear scaling cost, together with excellent performance in weak and strong scalings for parallel computing, enhancing the computational speed by about two orders of magnitude in comparison to the PPPM when $2^{11}$ cores are employed.

The paper is organized as follows. In Section \ref{model}, we overview  preliminary results on electrostatics and the kernel decomposition. In Section \ref{sec::RBSOG}, we describe the RBSOG algorithm in details. The complexity and theoretical analysis are provided in Section \ref{sec::erranal}. Section \ref{sec::numerical} contains simulation results on all-atom systems. Concluding remarks are made in Section \ref{sec::conclusion}. 

\section{Electrostatic interactions and kernel decomposition}\label{model}

Consider a charged system of $N$ particles located at $\{\bm{r}_i=(x_i,y_i,z_i),~i=1,\cdots,N\}$ with charge $\{q_{i},~i=1,\cdots,N\}$ in a cuboid domain $\Omega$ with side lengths $L_x$, $L_y$, and $L_z$, respectively. The domain is specified with a 3D periodic boundary condition such that it is replicated in all three directions so as to mimic the bulk environment. The system is assumed to be charge neutrality $\sum_{i=1}^{N}q_{i}=0$. Given the charge distribution, the potential on the $i$th particle has explicit expression of an infinite series,
\begin{equation}\label{solutionperiodic}
	\Phi_i=\sum_{j=1}^{N}\sum_{\bm{n}\in\mathbb{Z}^{3}}~ '~\dfrac{q_j}{\left|\bm{r}_j+\bm{n}\circ\bm{L}-\bm{r}_i\right|}
\end{equation}
where $\bm{L}=\left(L_x,L_y,L_z\right)$, the prime indicates that the case $i=j$ with $\bm{n}$ zero vector is excluded in the double summation, and ``$\circ$" represents the Hadamard product of two vectors. The electrostatic force of the $i$th particle is evaluated from $\bm{F}_i=-\nabla_{\bm{r}_i}U$, where $U$ is the total electrostatic energy of the system with the expression by superposition of energies on each charge that
\begin{equation}\label{energy}
	U=\dfrac{1}{2}\sum_{i=1}^{N}q_i\Phi_i.
\end{equation}
Here the coefficient $1/2$ is due to the double count of the interacting pairs.

The series \eqref{solutionperiodic} describes the electrostatic interactions between the $i$th charge and all other source and image charges. It converges conditionally \cite{Frenkel2001Understanding} due to the long-range nature, and a direct cutoff to calculate the potential is less accurate in producing physically meaningful solutions. Moreover, the Coulomb kernel has a singularity at the origin, as a result the Fourier transform cannot be directly applied. The Ewald splitting \cite{Ewald1921AnnPhys} provides a perfect solution of the two issues by dividing the Coulomb kernel into  contributions of near and far parts:
\begin{equation}\label{splitting}
\dfrac{1}{r}\doteq\mathcal{N}(r)+\mathcal{F}(r)=\frac{\erfc(\alpha r)}{r}+\frac{\erf(\alpha r)}{r},
\end{equation}
where $\erf(\cdot)$ is the error function and $\erfc(\cdot)$ is its complementary, and $\alpha$ is a positive parameter. The near part $\mathcal{N}(r)$ decays rapidly and the corresponding kernel summation problem can be truncated at a certain cutoff distance $r_c$ where the interactions beyond this distance are ignored. The far part $\mathcal{F}(r)$ is now a smooth function and decays slowly. The corresponding kernel summation is performed in the Fourier space as the Fourier series of a smooth function decays rapidly. The convergence of series \eqref{solutionperiodic} can be handled by correctly treating the zero frequency mode and the ignorance of this mode corresponds the use of the tin-foil boundary conditions. Furthermore, the FFT can be employed for accelerating the summation of the Fourier series, and one can take a large $\alpha$ such that the near part has a small interacting range, leading to an $\mathcal{O}(N\log N)$ cost in each step. These are ideas in most of popular MD packages such as LAMMPS \cite{thompson2022lammps} and GROMACS \cite{abraham2015gromacs}. 

Different from the Ewald splitting,  the SOG \cite{greengard2018anisotropic,exl2016accurate} approximates the interacting kernel by a series of Gaussians such that the near and far parts are grouped with those with small and large bandwidths, respectively. 
The SOG decomposition is particularly useful for calculating the far-part interaction as the FFT will benefit from the Gaussians which permit separation of variables. For the Coulomb kernel, one can use the integral identity for the power function, 
\begin{equation}\label{GammaExpansion}
	\dfrac{1}{r^{2\beta}}=\dfrac{1}{\Gamma(\beta)}\int_{-\infty}^{\infty}e^{-e^tr^2+\beta t}dt,
\end{equation}
with $\Gamma(\cdot)$ being the Gamma function and $\beta=1/2$. Employing variable transformation $t=\log \left(x^{2}/2\sigma^2\right)$ to the integral and applying the geometrically spaced quadrature $x=b^{-\ell}$, one obtains the following bilateral series approximation   \cite{beylkin2005approximation,beylkin2010approximation}, 
\begin{equation}\label{BSA}
	\dfrac{1}{r}\approx\dfrac{2\ln b}{\sqrt{2\pi\sigma^2}}\sum_{\ell=-\infty}^{\infty}\dfrac{1}{b^{\ell}}\exp\left[-\frac{1}{2}\left(\frac{r}{b^{\ell}\sigma}\right)^2\right],
\end{equation} 
which is an SOG expansion of $1/r$. Here, $b>1$ is a constant positive number and $\sigma$ controls the width of Gaussians. One important  feature of Eq.~\eqref{BSA} is that the relative error has the asymptotic bound as $b\rightarrow 1$ for all $r>0$ \cite{DEShaw2020JCP},
\begin{equation}\label{eq::pointwiseerror}
\left|1-\dfrac{2r\ln b}{\sqrt{2\pi\sigma^2}}\sum_{\ell=-\infty}^{\infty}\dfrac{1}{b^{\ell}}\exp\left[-\frac{1}{2}\left(\frac{r}{b^{\ell}\sigma}\right)^2\right]\right|\lesssim 2\sqrt{2}\exp\left(-\dfrac{\pi^2}{2\ln(b)}\right).
\end{equation}
It is noted that other SOG methods including least-square based methods \cite{greengard2018anisotropic,wiscombe1977exponential} and semi-analytic methods using Vall{\'e}e-Poussin sums \cite{jiuyang2021AAMM} are also developed where some of them are kernel-independent. 

The u-series  \cite{DEShaw2020JCP} constructs a fast method by making use of the far part in bilateral series approximation, i.e., the $\ell\geq0$ terms in Eq.~\eqref{BSA}. This leads to a decomposition of $1/r$ into a short-range term $\mathcal{N}_{b}^{\sigma}(r)$ and a long-range term $\mathcal{F}_{b}^{\sigma}(r)$, where
\begin{equation}\label{eq::SOGDEcomp}
	\mathcal{N}_{b}^{\sigma}(r)=\begin{cases}
	1/r-\mathcal{F}_{b}^{\sigma}(r),\quad\text{if}~r<r_c\\\\0,\qquad\qquad\quad\,\,\, \text{if}~r\geq r_c\end{cases}\quad\, 
\end{equation}
and $\mathcal{F}_{b}^{\sigma}(r)$ is an SOG expansion which takes the positive part of the bilateral series approximation \eqref{BSA} and truncates at $\ell=M$,
\begin{equation}\label{eq::SOGField}
\mathcal{F}_{b}^{\sigma}(r)=\sum_{{\ell}=0}^{M}\omega_{\ell} e^{-r^2/s_{{\ell}}^2}
\end{equation}
with coefficients,
\begin{equation}
w_{\ell}=(\pi/2)^{-1/2}b^{-\ell}\sigma^{-1}\ln b, \text{ and } s_{\ell}=\sqrt{2}b^{\ell}\sigma.	
\end{equation}
The cutoff radius $r_c$ is chosen to be the smallest root of $r\mathcal{F}_{b}^{\sigma}(r)-1$. The advantages of such a decomposition are as follows. First, the potential is exact up to the cutoff radius and it is continuous at the cutoff point where the condition $r_c\mathcal{F}_{b}^{\sigma}(r_c)-1=0$ is satisfied. 
Second, high-order continuity of the potential at $r_{c}$ could be also achieved, i.e,  the $C^1$ condition will ensure the force continuity upon the condition
\begin{equation}\label{condition2}
	\dfrac{1}{r_c^2}- \partial_r \mathcal{F}_b^{\sigma}(r_c)=0
\end{equation}
is satisfied. For fixed $b$ and $\sigma$, these continuous conditions can be conjointly reached by tuning the weight of the narrowest Gaussian to be
\begin{equation}\label{eq::omega0}
	\omega_0=\dfrac{1}{e^{-r_c^2/s_{0}^2}}\left[\dfrac{1}{r_c}-\mathcal{F}_b^{\sigma}(r_c)\right],
\end{equation}
and then solving the continuity equations to determine $r_c$. The re-definition of the narrowest Gaussian weight is necessary to  prevent large error. Due to these nice features, the u-series can produce the accuracy of the Ewald decomposition with a reduced computational effort. Moreover, the separability of the Gaussian is beneficial to save half of the sequential communication rounds if the FFT is used.

\section{Random Batch sum-of-Gaussians method}\label{sec::RBSOG}
In this section, we introduce the random mini-batch idea and apply it to electrostatic calculation based on the SOG decomposition, resulting in the RBSOG method for MD simulations.  

\subsection{Fourier expansion of the far part} \label{subsection::SOGdecomposition}
By the SOG decomposition Eq.\eqref{eq::SOGDEcomp} of Coulomb kernel, the potential energy \eqref{energy} can be decomposed as the sum of contributions from near and far parts, $U:=U_\mathcal{N}+U_\mathcal{F}$ with 
\begin{equation}\label{eq::U1}
	U_\mathcal{N}=\dfrac{1}{2}\sum_{\bm{n}}\!^\prime\sum_{i,j}q_iq_j\mathcal{N}_b^{\sigma}(|\bm{r}_{ij}+\bm{n}\circ\bm{L}|)
\end{equation}
and
\begin{equation}\label{eq::U2}
	U_\mathcal{F}=\dfrac{1}{2}\sum_{\bm{n}}\!^\prime\sum_{i,j}q_iq_j\mathcal{F}_b^{\sigma}(|\bm{r}_{ij}+\bm{n}\circ\bm{L}|)
\end{equation}
where $\bm{r}_{ij}=\bm{r}_i-\bm{r}_j$. The sum for $U_\mathcal{N}$ converges absolutely and rapidly, and one can  truncate it at $r=r_c$ in real space to simplify the computation. The sum for $U_\mathcal{F}$ converges conditionally and is treated in Fourier space. 

Let us define the Fourier transform pairs as
\begin{equation}\label{fourtrans}
	\widetilde{f}(\bm{k}):=\int_{\Omega}f(\bm{r})e^{-i\bm{k}\cdot\bm{r}}d\bm{r}\quad\,\,\text{and}\quad\,\, f(\bm{r})=\dfrac{1}{V}\sum_{\bm{k}}\widetilde{f}(\bm{k})e^{i\bm{k}\cdot\bm{r}},
\end{equation}
with $\bm{k}=2\pi (m_x/L_x,m_y/L_y,m_z/L_z)$ and $\bm{m}=(m_x,m_y,m_z)\in\mathbb{Z}^{3}$.
The structure factor $\rho(\bm{k})$ for the particle distribution is given by the conjugate of the Fourier transform of the charge density
\begin{equation}
	\rho(\bm{k}):=\sum_{i=1}^{N}q_ie^{i\bm{k}\cdot\bm{r}_i}.
\end{equation}
The 3D Fourier transform of the SOG series \eqref{eq::SOGField} reads,
\begin{equation} \label{FourierSOG}
	\widetilde{\mathcal{F}}_b^{\sigma}(\bm{k})=\pi^{3/2}\sum_{\ell=0}^{M}\omega_{\ell}s_{\ell}^3e^{-s_{\ell}^2 k^2/4},
\end{equation}
where $k=|\bm{k}|$.
The far-part energy in the Fourier space is then given by  
\begin{equation} \label{uF}
U_\mathcal{F}=U_\mathcal{F}^*+U_\mathcal{F}^0-U_\mathcal{F}^{\emph{self}},
\end{equation}
where $U_\mathcal{F}^*$ is the following series,
\begin{equation}
U_\mathcal{F}^*=\sum_{|\bm{k}|\neq0}\widetilde{\mathcal{F}}_b^{\sigma}(\bm{k})\dfrac{|\rho(\bm{k})|^2}{2V}.
\end{equation}
The second term $U_\mathcal{F}^0$ is the contribution from the zero-frequency mode, which vanishes when the tinfoil boundary condtion is specified. $U_\mathcal{F}^{\emph{self}}$ in Eq.~\eqref{uF} corresponds to the contribution from the self energies of Gaussians, 
\begin{equation}
U_\mathcal{F}^{\emph{self}}=\frac{(b-b^{-M})\ln b}{\sqrt{2\pi\sigma^2}(b-1)} \sum_{i=1}^{N}q_{i}^2.
\end{equation} 
Note that $U_\mathcal{F}^{\emph{self}}\rightarrow \sum_{i=1}^N q_{i}^2/\sqrt{2\pi\sigma^2}$ when $b\rightarrow 1$ together with $M\rightarrow \infty$.
 
%In practice, we truncate $g_{\omega,s}^{\infty}(r)$ at $M$ terms by using either the so-called model reduction \cite{moore1981principal} or direct cutoff, with $M$ chosen to be large enough so that the truncation error is acceptably small (this will be quantified shortly). However, there is a trade-off between accuracy and expensive floating-point operation of multiple Gaussians, which is also linked with the implementation technology. We expect to achieve a good balance by taking the following two aspects. First, the cutoff radii $r_c$ selected in SOG decomposition is expect to be not very small, i.e., the SOG does not need to approximate the kernel function over almost the entire interval. Instead, the SOG approximation is only required to work on the smooth part of Coulomb kernel. We expect that this will empirically reduce the number of Gaussians we needed for the same error level, since the high frequency part of the kernel is not considered. Second, we propose an acceleration strategy combining mathematical and implementation innovations which are described in Section \ref{subsec::implementation}.    
 
With the Fourier expansion for the far-part energy, one can treat $U_\mathcal{N}$ and $U_\mathcal{F}$ in real and Fourier spaces, respectively. Let $r_c$ be the cutoff radius for the real space, and $\mathbb{I}(i)$ be the neighbor list of the $i$th particle, which is the set of particles within the cutoff radius.  
The force acting on the $i$th particle, $\bm{F}_i$, is composed of three contributions, 
\begin{equation}\label{eq::forcesplit}
	\begin{split}
		\bm{F}_{\mathcal{N},i}&=\sum_{j\in\mathbb{I}(i)}q_iq_j\left(\dfrac{1}{2r_{ij}^3}-\sum_{\ell=0}^{M}\dfrac{\omega_\ell}{s_{\ell}^2}e^{-r_{ij}^2/s_{\ell}^2}\right)\bm{r}_{ij},\\
		\bm{F}_{\mathcal{F},i}&=\sum_{\bm{k}\neq 0}\dfrac{q_i\bm{k}}{V}\cdot\widetilde{\mathcal{F}}_b^\sigma(\bm{k})\,\text{Im}\left(e^{-i\bm{k}\cdot\bm{r}_i}\rho(\bm{k})\right),
	\end{split}
\end{equation}
and $\bm{F}_{i}^0=-\nabla_{\bm{r}_i}U_\mathcal{F}^0$. It is noted that the self energy contribution vanishes, $\nabla_{\bm{r}_i}U_{\text{self}}=0$, as the particles are usually invariant for most physical ensembles.

Given the cutoff radius, the cost of computing the near part is proportional to the product of $N$ and the average neighbors within the volume $4\pi r_c^3$ for each particle. In the u-series method \cite{DEShaw2020JCP}, the Fourier space has a cutoff $k_c$ which is set as $\mathcal{O}(1/s_0)$, inversely proportional to the bandwidth of the narrowest Gaussian, and the grid-based FFT is employed to accelerate the calculation. Let $r_0=r_c s_0/\sqrt{2}$ which is the smallest root of $r\mathcal{F}_{b}^{1}(r)-1$. Since  the Fourier modes within the cutoff frequency is proportional $k_c^3=\mathcal{O}(1/s_0^3)$, the minimization of the total computational cost leads to the $\mathcal{O}(r_0^{3/2})$ complexity for the u-series method. For the FFT, the mesh spacing of the grid is also  proportional to $s_0$, leading to the computational effort scaling at least linearly with the number of grid points $\mathcal{O}(1/s_{0}^3)$ \cite{shan2005gaussian,deserno1998mesh}. In the following, we will introduce a random batch strategy in the Fourier space to nicely handle this problem by avoiding the use of FFT. The resulting RBSOG has linear $\mathcal{O}(N)$ complexity and the computational cost is independent of the minimal bandwidth $s_{0}$ of the SOG.

\subsection{Analysis of the zero-frequency mode}\label{subsec::infinite}
The contribution from the zero-frequency mode, $U_\mathcal{F}^0$,  is a divergent term and it needs to be properly treated to satisfy the the macroscopic property of the system. This has been discussed \cite{smith1981electrostatic,yeh1999ewald,hu2014infinite,dos2016simulations} for Ewald-type methods, but remains unexplored for the SOG-type methods.  

For the Fourier transform, one performs the Taylor series expansion of $U_\mathcal{F}^0$ with respect to $\bm{k}$, takes the $\bm{k}\rightarrow \bm{0}$ limit, and then obtains,
\begin{equation}\label{eqeq::UIB}
		U_\mathcal{F}^0=\dfrac{\pi^{3/2}}{2V}\lim_{\bm{k}\rightarrow\bm{0}}\sum_{i,j}q_iq_j\sum_{\ell=0}^M w_{\ell}s_{\ell}^3\left[1-s_{\ell}^2|\bm{k}|^2/4+i\bm{k}\cdot\bm{r}_{ij}-\dfrac{1}{2}(\bm{k}\cdot\bm{r}_{i,j})^2+\mathcal{O}(|\bm{k}|^3)\right].
\end{equation}
In the summation, the first two terms vanishes due to the charge neutrality. It is noted that even for a non-neutral system (e.g., a quasi-2D system with surface charge being treated implicitly), these terms does not depend on $\bm{r}$ and thus can be renormalized. The third term  in Eq.\eqref{eqeq::UIB} vanishes too due to the symmetry condition, $\bm{k}\cdot\bm{r}_{ij}=-\bm{k}\cdot\bm{r}_{ji}$. One then has,
\begin{equation}
	U_\mathcal{F}^0=-\dfrac{1}{4V}\lim_{\bm{k}\rightarrow\bm{0}}\sum_{i,j}q_iq_j(\bm{k}\cdot\bm{r}_{ij})^2\widetilde{\mathcal{F}}_b^\sigma(0),
\end{equation}
which requires the knowledge of factor $\widetilde{\mathcal{F}}_b^\sigma(0)$. 

The Fourier transform of the SOG $\mathcal{F}_b^{\sigma}(r)$ can be written as, 
\begin{equation}\label{eqeq::B3}
	\begin{split}
		\widetilde{\mathcal{F}}_b^{\sigma}(\bm{k})
		&=\dfrac{4\pi}{k^2}-\int_{\Omega_{r_c}}\left[\dfrac{1}{\bm{r}}-\mathcal{F}_b^{\sigma}(|\bm{r}|)\right]e^{-i\bm{k}\cdot\bm{r}}d\bm{r},
	\end{split}
\end{equation}
where $\Omega_{r_c}$ is the ball of radius $r_c$ centered at the origin, and the $4\pi/k^2$ term is the Fourier transform of Coulomb kernel. Clearly, the asymptotics of $\widetilde{\mathcal{F}}_b^{\sigma}(0)$ for $\bm{k}\rightarrow \bm{0}$ is $4\pi/k^2$, thus,   
\begin{equation}\label{eqeq::UUIB}
	U_\mathcal{F}^0 =-\dfrac{\pi}{V}\sum_{i,j}q_iq_j\lim_{\bm{k}\rightarrow\bm{0}}\dfrac{(\bm{k}\cdot\bm{r}_{ij})^2}{k^2}.
\end{equation}

Eq.\eqref{eqeq::UUIB} is consistent with that for the Ewald summation \cite{hu2014infinite,dos2016simulations}. This is not a surprise as the $\bm{k}\rightarrow\bm{0}$ term accounts for the long-rang electrostatic correlations at the limit $|\bm{r}|\rightarrow\infty$. Since the SOG series with $M\rightarrow \infty$ is exact for the far-field limit, it naturally gives this consistency with the Ewald splitting. 

If the tinfoil boundary conditions is specified for $r\rightarrow \infty$, the dielectric permittivity (throughout the paper it sets to be 1) becomes infinity and $U_\mathcal{F}^0$ vanishes. In the case of finite permitivity, one can assume a infinitely large crystal box built up along spherical radial distance by filling the azimuthal angle $\theta$ and the polar angle $\varphi$ parts for each $r$. A spherical infinite boundary term \cite{hu2014infinite} can be obtained,
\begin{equation}\label{eq::sperical}
	\begin{split}
		U_\mathcal{F}^0&=-\dfrac{\pi}{V}\sum_{i,j}q_iq_j\lim_{\bm{k}\rightarrow\bm{0}}\dfrac{1}{4\pi k^2} \int_{0}^{2\pi}\int_{0}^{\pi}(\bm{k}\cdot\bm{r}_{ij})^2\sin\theta d\theta d\varphi\\
		&=\dfrac{2\pi}{3V}(\mathcal{M}_x^2+\mathcal{M}_y^2+\mathcal{M}_z^2),
	\end{split}
\end{equation}
where $(\mathcal{M}_x,\mathcal{M}_y,\mathcal{M}_z)=\sum_i q_i \bm{r}_i$ is the dipole moment. This zero mode contribution corresponds to the case of the periodic system embedded in a medium with finite dielectric permittivity.  

Now suppose that one considers a system of a slab geometry with 2D periodicity along the $xy$ plane, where a planar infinite boundary term along the $z$ direction is often required. One can apply the conditions $k_x = k_y = 0$ and take the limit of $k_z=0$. This yields,
\begin{equation} \label{planar}
U_\mathcal{F}^0=-\dfrac{\pi}{V}\sum_{i,j}q_iq_j\lim_{k_z\rightarrow0}\left[\lim_{k_x,k_y\rightarrow 0}\dfrac{(\bm{k}\cdot\bm{r}_{ij})^2}{|\bm{k}|^2}\right]=\dfrac{2\pi}{V}\mathcal{M}_z^2.
\end{equation}
This dipole-term contribuiton has been used for the Ewald3DC \cite{yeh1999ewald} to handle electrostatics in charged systems with slab geometry. For other geometries, one can refer to Refs. \cite{smith1981electrostatic,dos2016simulations}. We remark that our discussion on the RBSOG will focus on 3D periodic systems, but  the algorithm can be easily  extended to systems with other boundary conditions by taking into account the correct correction from the zero-frequency mode energy. For example, if the system is partially periodic in some directions with dielectric interfaces \cite{maxian2021fast,liang2020harmonic} which is often considered for nanopores and 2D materials, the extension of our algorithm is straightforward by using the planar infinite boundary term \eqref{planar}.
 
\subsection{Random batch importance sampling}\label{subsec::randombatch}
We now work on conducting a fast method for evaluating the Fourier space force with $\mathcal{O}(N)$ complexity and less communication cost. The basic idea is to use a random batch importance sampling to approximate the force $\bm{F}_{\mathcal{F},i}$ by a modification of the sampling for the RBE \cite{Jin2020SISC}.

The Fourier transform  $\widetilde{\mathcal{F}}_b^{\sigma}(\bm{k})$ in Eq. \eqref{FourierSOG} is also an SOG series, which is summable and can be normalized to a discrete probability distribution. Denote the sum of such factors by
\begin{equation} \label{eq::S}
S:=\pi^{-\frac{3}{2}}\sum_{\bm{k}\neq \bm{0}}\widetilde{\mathcal{F}}_b^{\sigma}(\bm{k})=\sum_{\ell=0}^{M}\omega_{\ell}s_{\ell}^3\left(\prod_d H_{\ell}^d-1\right)
\end{equation} 
with
\begin{equation}\label{Hell}
H_{\ell}^d:=\sum_{m\in\mathbb{Z}}e^{-s_{\ell}^2\pi^2m^2/L_d^2}=\sqrt{\dfrac{L_d^2}{\pi s_{\ell}^2}}\sum_{m\in\mathbb{Z}}e^{-\pi^2m^2L_d^2/s_{\ell}^2}
\end{equation}
where $d\in\{x,y,z\}$ denotes the index of the Cartesian coordinate, and the second equality in Eq.\eqref{Hell} holds due to the Poisson summation formula. Eq.\eqref{Hell} can be simply truncated at some $m=\pm \mathfrak{m}$ such that $\mathfrak{m} L_d/s_{\ell}=\mathcal{O}(1)$ to obtain an easier-to-calculate form (generally speaking, $\mathfrak{m}=2$ is enough). Then, one can regard the sum as an expectation over the probability distribution
\begin{equation}\label{distribution}
\mathscr{P}_{\bm{k}}:= \frac{\widetilde{\mathcal{F}}_b^{\sigma}(\bm{k})}{\pi^{3/2} S}.
\end{equation}
The distribution Eq.\eqref{distribution} is a summation of discrete Gaussian distributions. Compared to the case of the RBE \cite{Jin2020SISC}, $\mathscr{P}_{\bm{k}}$ is by no means separable when $M>1$, whereas it is still summable and thus can be efficiently sampled by strategy  below through an acceptance-rejection criteria.

We apply the Metropolis--Hastings (MH) \cite{hastings1970monte} algorithm to sample from the discrete distribution Eq.\eqref{distribution}. In the MH procedure, the proposal $\bm{m}^*=(m^{*}_x,m^{*}_y,m^{*}_z)$ is generated by first drawing
\begin{equation}
m^{*}_d\sim \mathcal{N}\left(0,~ \frac{1}{2}(L_d/s_{0}\pi)^2\right),
\end{equation}
from the normal distribution with mean zero and variance $(L_d/s_{0}\pi)^2/2$, separately. By choosing $m^{*}_d$ as the normal distribution corresponding to the widest Gaussian rather than other Gaussians, the advantages of importance sampling are threefold. First, $m^{*}_d$ is separable and easy to sampling. Second, since the long wave modes are more important for the periodic effects and are more likely to be chosen, this importance sampling strategy is efficient and accurate by the measurement of the variance reduction. Third, the high frequency modes have also decent probability to be selected since the widest Gaussian has slowest decaying rate.

The new sample $\bm{m}^*$ and its corresponding acceptance probability $q(\bm{m}^*|\bm{m})$ are denoted by following the procedure in \cite{Jin2020SISC}, expressed as,
\begin{equation}
\bm{m}^{*}=\text{round}(m^{*}_x,m^{*}_y,m^{*}_z)
\end{equation}
and
\begin{equation}
q(\bm{m}^{*}|\bm{m})=\prod_d q(m^{*}_d|m_d)
\end{equation}
where
\begin{equation}
\begin{split}
q(m^{*}_d|m_d)&= \dfrac{\sqrt{\pi} s_0}{L_d}  \int_{m^{*}_d-1/2}^{m^{*}_d+1/2}e^{-(\pi s_0m /L_d)^2}dm\\
&=
\begin{cases}
\erf\left(\dfrac{\pi s_{0}}{2L_d}\right), &m^{*}_d=0,\\\\
\dfrac{1}{2}\left[\erf\left(\dfrac{(m^{*}_d+1/2)\pi s_{0}}{2L_d}\right)-\erf\left(\dfrac{(m^{*}_d-1/2)\pi s_{0}}{2L_d}\right)\right], &m^{*}_d\neq 0.
\end{cases}
\end{split}
\end{equation}
The acceptance rate is expected to be appropriate due to
\begin{equation}
\dfrac{p(\bm{m}^{*})}{p(\bm{m})} \approx \dfrac{q(\bm{m}^{*}|\bm{m})}{q(\bm{m}|\bm{m}^{*})}.
\end{equation}
Note that the samples with $m_x=m_y=m_z=0$ will be discarded. In Section \ref{subsec::implementation}, we develop a useful parallel strategy such that the sampling procedure can be efficiently performed.

The MD simulation can then be done via this random batch importance sampling strategy. At each simulation step, one picks a number of batches $P$ and draws $P$ frequencies $\{\bm{k}_{\eta},\eta=1,\cdots,P\}$ from the discrete distribution $\mathscr{P}_{\bm{k}}$ using the MH procedure. Then the far-field force $\bm{F}_{\mathcal{F},i}$ can be approximated by the following random variable 
\begin{equation}\label{eq::approximateForce}
\bm{F}_{\mathcal{F},i}^*:=\dfrac{S}{P}\sum_{\eta=1}^{P}\dfrac{\pi^{3/2}q_i\bm{k}_{\eta}}{V}\text{Im}\left(e^{-\mathrm{i}\bm{k}_{\eta}\cdot\bm{r}_i}\rho(\bm{k}_{\eta})\right).
\end{equation}
In MD simulation, we use this stochastic force $\bm{F}_{\mathcal{F},i}^*$ to conduct simulation rather than $\bm{F}_{\mathcal{F},i}$, resulting in a cheaper version with computational complexity $\mathcal{O}(PN)$ in comparison to lattice-based Ewald-type methods. This implies that the RBSOG method has linear complexity per timestep as $P=\mathcal{O}(1)$. We prove that $\bm{F}_{\mathcal{F},i}^{*}$ is an unbiased estimator of $\bm{F}_{\mathcal{F},i}$ with bounded variance in Section \ref{subsec::consistency}. The molecular dynamics method \cite{Frenkel2001Understanding} using the RBSOG is summarized in Algorithm \ref{al::RBSOG}. 

\begin{algorithm}[H]
	\caption{(Random batch sum-of-Gaussians algorithm)}\label{al::RBSOG}
	\begin{algorithmic}[1]
		\State Choose $r_c$ (the cutoffs in real space), $\Delta t$ (time step size), total simulation step $N_{T}$, and batch size $P$. Construct applicable SOG decomposition as in Eq.\eqref{eq::SOGDEcomp}.  Initialize positions and velocities of charges of all particles.
		\State Sample sufficient number of nonzero $\bm{k}\sim \mathscr{P}_{\bm{k}}$ by the MH procedure to form a frequency sequence.
		\For {$n \text{ in } 1: N_{T}$}
		\State Integrate Newton's equations for time $\Delta t$ with appropriate integration scheme and some appropriate thermostat. The real part of the Coulomb forces is computed using $\bm{F}_{\mathcal{N},i}$ as in Eq.\eqref{eq::forcesplit}. The Fourier part is then computed using the stochastic force $\bm{F}_{\mathcal{F},i}^{*}$ as in Eq.\eqref{eq::approximateForce} with the $P$ frequencies selected from the frequency set in order.
		\EndFor
	\end{algorithmic}
\end{algorithm}

Note that the SOG decomposition in Eqs.~\eqref{eq::SOGDEcomp} and \eqref{eq::SOGField} can be also calculated by a lattice-based FFT \cite{DEShaw2020JCP}. In comparison to this method, the RBSOG algorithm has two merits. First, the use of the random batch idea combined with importance sampling avoids the communication-intensive framework FFT. Second, it is a mesh-free algorithm, and the computational cost for evaluating multiple Gaussians is significantly reduced to $\mathcal{O}(MP)$, independent of the particle number. For comparison, the FFT leads to an on-grid convolution of Gaussians, which is computed in real space with $\mathcal{O}(MN)$ operations.

For the sampling of the Fourier modes, one can also treat the SOG $\mathcal{F}_{b}^{\sigma}(r)$ as the sum of discrete Gaussian potentials $\omega_{\ell}e^{-r^2/s_{\ell}^2}$, and each can be normalized to a discrete probability distribution. One can build random mini-batch sampling proceduce for these Gaussian potentials independently with adaptive batch sizes depending on the Gaussian bandwidths.  This is intuitive, but less efficient because the sampling processes are needed for all Gaussians.

\subsection{Implementation details}\label{subsec::implementation}
We present the optimized implementation details with distributed-memory parallelism and vectorization for the RBSOG-based MD simulations. Our implementation is based on the message passing interface (MPI) and the Intel 512-bit single-instruction multiple-data (SIMD) instruction. Below we discuss the strategies for the short-range and long-range components, respectively.   

\begin{figure*}[ht]	
	\centering
	\includegraphics[width=1.0\textwidth]{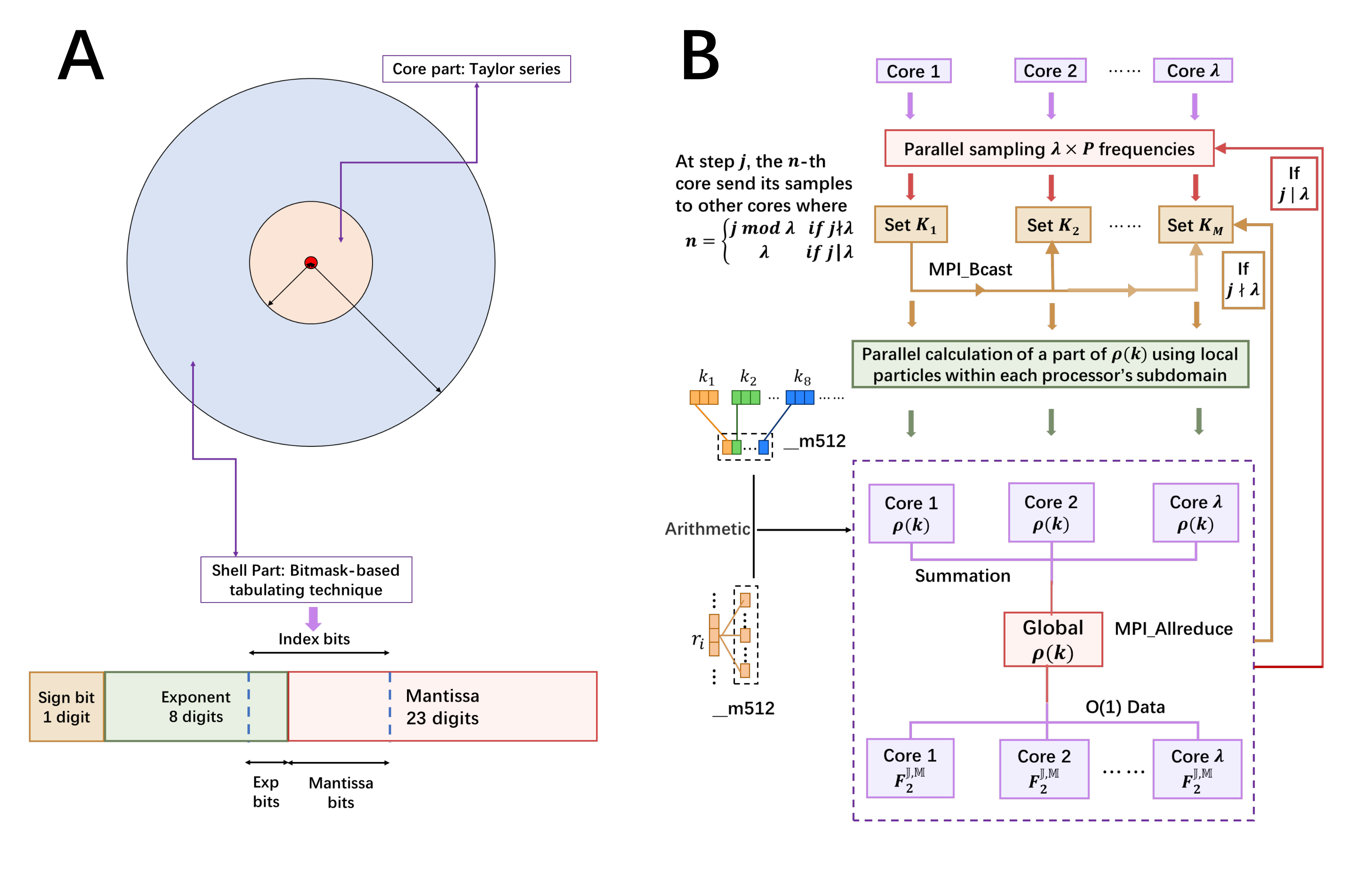}
	\caption{Sketch map. (A) The core-shell structured kernel approximation technique applied to the calculation of short-range part. (B) Parallel strategy in the Fourier space employing the SIMD.}
	\label{fig:sketchmap}
\end{figure*}

%\subsection{Core-shell structured kernel approximation}\label{subsec::coreshell}

The near part $\bm{F}_{\mathcal{N},i}$ in Eq.\eqref{eq::forcesplit} requires the computation of $M$ Gaussians for each target-source pair. This is expensive from the point of view of computational efficiency and we should avoid direct calculation of Gaussians for each step, as well as the square root operation. Here, we introduce a core-shell structured kernel approximation by introducing an additional cutoff radius $r_{\text{in}}$ which is smaller than $r_c$, as is depicted in Fig.\ref{fig:sketchmap} (A).  We treat the core neighbors by $\mathscr{Q}-$terms partial Taylor expansion, i.e., the Gaussians are approximated by Taylor expansion whereas the square root and reciprocal are directly computed,
\begin{equation}
\dfrac{1}{2r^3}-\sum_{\ell=0}^{M}\dfrac{\omega_{\ell}}{s_{\ell}^2}e^{-r^2/s_{\ell}^2}= \dfrac{1}{2r^3}-\sum_{i=1}^{\mathscr{Q}}\mathcal{A}_{i}r^{2i-2}+\mathcal{O}\left(\mathcal{A}_{\mathscr{Q}+1}r^{2\mathscr{Q}}\right),
\end{equation} 
with
\begin{equation}
\mathcal{A}_{i}=\sum_{\ell=0}^{M}\dfrac{\omega_{\ell}}{(i-1)!s_{\ell}^2}\left(\dfrac{1}{s_{\ell}^2}\right)^{i-1}
\end{equation}
being the precomputed Taylor coefficients. Since the Gaussian is smooth near the origin, a small number of Taylor terms will give sufficient precision for a moderate $r_{\text{in}}$. 

The shell neighbors (particles between $r_\mathrm{in}$ and $r_c$) are treated by the bitmask-based tabulating technique \cite{wolff1999tabulated}. The expensive square root operation is avoided by using $r_{ij}^2$ as the measure of choosing shell neighbors. Note that the binary representation of a single float-point number typically contains a sign bit, $8$ exponent bits, and a mantissa component of $23$ bits. We take a few low order bits ($B_{\text{exp}}$) from the exponent and some high order bits ($B_{\text{man}}$) from the mantissa.  Fig.\ref{fig:sketchmap} (A) depicts each component of the index bits. The total table size is $2^{B_{\text{exp}}+B_{\text{man}}}$. A linear interpolation is used to approximate the data between successive points in the table.

Here are two remarks for the tabulating technique. First, the choice of table length is a tradeoff between accuracy and speed. A larger size provides more accurate force calculations, but requires more memory which can slow down the simulation. The analysis on the local and accumulated errors is referred to \cite{wolff1999tabulated,andrea1983role}. Second, the tabulated error will increase when the kernel tends to be singular, but is well solved by setting a core region.  

%\subsection{Parallel sampling and domain decomposition}\label{subsec::parallelsampling}

We next describe the sampling and domain decomposion approaches for the far part. Each MD step requires a serial importance sampling procedure and a global broadcast operation, and this cost can be eliminated by the designed non-jammed communication and computation/communication overlapping. 
Fig.\ref{fig:sketchmap}(B) describes the following procedure.
Suppose that $\lambda$ MPI ranks are employed.  The $\lambda$ independent sampling processes are first executed in parallel within each rank. Then the $1$-st MPI rank broadcasts the samples to other ranks using blocking operation. And then, the computation step of the Coulomb interaction is executed, whereas the samples in the $2$-nd MPI rank is concurrently broadcasted. The above two steps are then repeated for $\lambda-1$ times followed by a new sampling loop. This sampling strategy evaluates and updates the samples every $\lambda$ steps. It is easy to implement, but significantly improves the scalability.  

On the approximation of the far-field force by Eq.\eqref{eq::approximateForce}, the samples and the particle positions are packaged into $512$-bit vectors when the structure factors $\rho(\bm{k})$ are evaluated using the local atoms of each MPI rank. Only one global operation, MPI$\_$Allreduce, is required for all the structure factors. The approximated force $\bm{F}_{\mathcal{F},i}^{*}$ of each particle is then obtained from the structure factors. For better demonstration of the performance of RBSOG, we also implement the state-of-the-art techniques \cite{thompson2022lammps} including the domain decomposition, the ghost-atom communication, and the construction of neighbor lists combined with a multiple-page data structure. 

\section{Error analysis}\label{sec::erranal}
In this section, we provide some error analysis on the RBSOG method, including the error estimate of the truncated Gaussian series, the error in the near-field energy, and the artificial variance in the random mini-batch approximation.  

\subsection{Truncation errors and parameter determination} \label{subsec::error}

To estimate the errors of the near- and far-field energies   \eqref{eq::U1} and \eqref{uF},  we introduce the relative errors by
\begin{equation}
\mathscr{R}^{\mathcal{N}}(M):=\left|U_{\text{err}}^{\mathcal{N}}/U_{\mathcal{N}}\right|	,\quad\text{and}\quad\mathscr{R}^{\mathcal{F}}(M):=\left|U_{\text{err}}^{\mathcal{F}}/U_{\mathcal{F}}^{*}\right|,
\end{equation}
where $U_{\text{err}}^{\mathcal{N}}$ is the truncation error due to the cutoff of interactions in real space, 
\begin{equation}\label{eq::UerrN}
	U_{\text{err}}^{\mathcal{N}}=\dfrac{1}{2}\sum_{|\bm{r}_{ij}+\bm{n}\circ\bm{L}|>r_c}q_iq_j\mathcal{N}_b^{\sigma}(|\bm{r}_{ij}+\bm{n}\circ\bm{L}|),
\end{equation}
and $U_{\text{err}}^{\mathcal{F}}$ is the error by truncating the infinite SOG for the Fourier series,
\begin{equation}\label{eq:UerrF}
	U_{\text{err}}^{\mathcal{F}}=\sum_{|\bm{k}|\neq0}\left[\lim_{M\rightarrow\infty}\widetilde{\mathcal{F}}_{b}^{\sigma}(|\bm{k}|)-\widetilde{\mathcal{F}}_{b}^{\sigma}(|\bm{k}|)\right]\dfrac{|\rho(\bm{k})|^2}{2V},
\end{equation}
where $\widetilde{\mathcal{F}}_{b}^{\sigma}(|\bm{k}|)$ is defined via Eq.\eqref{eqeq::B3}.
Theorem \ref{theorem3} presents an estimate for $\mathscr{R}^{\mathcal{F}}(M)$, which is from  Ref.~\cite{DEShaw2020JCP}, but with a slight difference as  the coefficient $\omega_{0}$ here is modified to satisfy the continuous conditions by Eq.~\eqref{eq::omega0}.

\begin{theorem}\label{theorem3}
	Given parameters $b$ and $\sigma$ for the SOG decomposition as \eqref{eq::SOGDEcomp}, the error  $U_{\text{err}}^{\mathcal{F}}$ due to the truncation of the SOG at $\ell=M$ can be estimated by
\begin{equation}
	\mathscr{R}^{\mathcal{F}}(M)\leq b^{2M}e^{-2(b^{2M}-1)(\pi b\sigma/\mathcal{L})^2}
\end{equation}
with $\mathcal{L}=\max\{L_x,L_y,L_z\}$ being the maximal edge of the simulation box.
\end{theorem}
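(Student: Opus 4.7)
The plan is to reduce the relative error on the full far-field energy to a pointwise ratio of Fourier-transformed Gaussians and then evaluate that ratio at the worst wavenumber.

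First I would substitute the closed form of $\widetilde{\mathcal{F}}_b^\sigma(\bm{k})$ from Eq.~\eqref{FourierSOG} into Eq.~\eqref{eq:UerrF}, so that the numerator $\lim_{M\to\infty}\widetilde{\mathcal{F}}_b^\sigma(\bm{k})-\widetilde{\mathcal{F}}_b^\sigma(\bm{k})$ becomes the tail $\sum_{\ell=M+1}^{\infty}4\pi\sigma^2\ln b\cdot b^{2\ell}e^{-\sigma^2 k^2 b^{2\ell}/2}$ and the denominator factor $\widetilde{\mathcal{F}}_b^\sigma(\bm{k})$ becomes the head sum $\sum_{\ell=0}^{M}4\pi\sigma^2\ln b\cdot b^{2\ell}e^{-\sigma^2 k^2 b^{2\ell}/2}$. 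Since $|\rho(\bm{k})|^2/(2V)$ and each Gaussian term are nonnegative, the ratio $\mathscr{R}^{\mathcal{F}}(M)$ is dominated by the pointwise ratio
\[
\mathscr{R}^{\mathcal{F}}(M)\ \le\ \sup_{\bm{k}\neq \bm{0}}\ \frac{\sum_{\ell=M+1}^{\infty}b^{2\ell}e^{-\sigma^2 k^2 b^{2\ell}/2}}{\sum_{\ell=0}^{M}b^{2\ell}e^{-\sigma^2 k^2 b^{2\ell}/2}}.
\]

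Second I would show that this pointwise ratio is non-increasing in $k^2$. Writing $a=\sigma^2 k^2/2$ and $\tau_\ell(a)=b^{2\ell}e^{-a b^{2\ell}}$, differentiation in $a$ shows that the super-exponential Gaussian factor kills large-$\ell$ terms (which populate the numerator) much faster than small-$\ell$ ones (which dominate the denominator), so the ratio shrinks as $k$ grows. Hence the worst case occurs at the minimal wavenumber $|\bm{k}_{\min}|=2\pi/\mathcal{L}$, where $a=a_{\min}:=2\pi^2\sigma^2/\mathcal{L}^2$.

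Third I would evaluate at $k=k_{\min}$ by bounding the tail from above by its leading $\ell=M+1$ term $\tau_{M+1}$ times a convergent geometric factor: for $\ell\ge M+1$ the consecutive ratio $\tau_{\ell+1}/\tau_\ell=b^2 e^{-a_{\min} b^{2\ell}(b^2-1)}$ is bounded uniformly below $1$ in the regime $a_{\min}b^{2(M+1)}\gtrsim 1$ where the bound is meaningful, so this factor is harmless. I would then bound the denominator from below by its $\ell=1$ term $\tau_1=b^2 e^{-a_{\min}b^2}$. Dividing, the exponent telescopes to
\[
\frac{\tau_{M+1}}{\tau_1}\ =\ b^{2M}\exp\!\bigl[-a_{\min}(b^{2(M+1)}-b^2)\bigr]\ =\ b^{2M}\exp\!\bigl[-2(b^{2M}-1)(\pi b\sigma/\mathcal{L})^2\bigr],
\]
which is precisely the claimed bound. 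The re-definition of $\omega_0$ via Eq.~\eqref{eq::omega0} alters only the $\ell=0$ head term, which is already discarded by the $\tau_1$ lower bound, so the asymptotic form inherited from \cite{DEShaw2020JCP} is not affected.

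The main obstacle will be producing a sufficiently sharp tail estimate: a loose geometric-series factor would inflate the prefactor $b^{2M}$, while a loose head lower bound would shift the rate constant in the exponent. The key trick is to align the tail's leading $\ell=M+1$ term with the head's $\ell=1$ term (rather than $\ell=0$), which is exactly what converts a naive rate $a_{\min}(b^{2M}-1)$ into the stated $a_{\min}b^{2}(b^{2M}-1)=2(\pi b\sigma/\mathcal{L})^2(b^{2M}-1)$.
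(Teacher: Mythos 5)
Your reduction of $\mathscr{R}^{\mathcal{F}}(M)$ to a pointwise ratio of Fourier coefficients (via the mediant inequality over the nonnegative weights $|\rho(\bm{k})|^2/2V$) and your monotonicity-in-$k^2$ observation are both correct, and your final algebra $\tau_{M+1}/\tau_1=b^{2M}\exp[-2(b^{2M}-1)(\pi b\sigma/\mathcal{L})^2]$ does reproduce the stated constant. The genuine gap is in your third step: bounding the tail $\sum_{\ell\ge M+1}\tau_\ell$ by its leading term $\tau_{M+1}$ necessarily costs a multiplicative geometric factor $G=\sum_{j\ge 0}q^{j}=1/(1-q)\ge 1$, with equality only in the degenerate case $q=0$. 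Declaring this factor "harmless" does not remove it: what your argument proves is $\mathscr{R}^{\mathcal{F}}(M)\le G\,b^{2M}e^{-2(b^{2M}-1)(\pi b\sigma/\mathcal{L})^{2}}$ with $G>1$, which is strictly weaker than the theorem, and in the regime $a_{\min}b^{2(M+1)}=\mathcal{O}(1)$ (large box or moderate $M$) the ratio $q=b^{2}e^{-a_{\min}b^{2(M+1)}(b^{2}-1)}$ need not even be below $1$.

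The paper avoids this entirely by never splitting the tail into individual terms. Using the recursions $\omega_{\ell+M}=b^{-M}\omega_{\ell}$ and $s_{\ell+M}=b^{M}s_{\ell}$, the whole tail is rewritten exactly as
\begin{equation*}
\sum_{\ell=M+1}^{\infty}\omega_{\ell}s_{\ell}^{3}e^{-s_{\ell}^{2}k^{2}/4}=b^{2M}\sum_{\ell=1}^{\infty}\omega_{\ell}s_{\ell}^{3}e^{-s_{\ell}^{2}k^{2}/4}\,e^{-(b^{2M}-1)s_{\ell}^{2}k^{2}/4},
\end{equation*}
and only the damping factor $e^{-(b^{2M}-1)s_{\ell}^{2}k^{2}/4}$ is bounded at $\ell=1$, $|\bm{k}|=2\pi/\mathcal{L}$; the remaining series $\sum_{\ell\ge 1}\omega_{\ell}s_{\ell}^{3}e^{-s_{\ell}^{2}k^{2}/4}$ is compared to the full head $\sum_{\ell=0}^{M}$ rather than to the single term $\tau_{1}$, so no geometric factor ever appears. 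This exact index shift is where the exponent $ab^{2}(b^{2M}-1)$ you found actually comes from. If you replace your "leading term times geometric series" tail estimate by this identity, and your single-term denominator bound by the comparison $\sum_{\ell\ge 1}\tau_{\ell}\le\sum_{\ell=0}^{M}\tau_{\ell}$ (valid whenever the tail beyond $M$ does not exceed the $\ell=0$ term, which is the only regime in which the bound is informative, and which also disposes of the $\omega_{0}$ redefinition), your argument closes and your monotonicity step becomes unnecessary.
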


\begin{proof}
	Note that the parameters of Gaussians have the recursive relations $\omega_{\ell+1}=b^{-1}\omega_{\ell}$ and $s_{\ell+1}=bs_\ell$ for $\ell\geq 0$, respectively. By Eq.~\eqref{eq:UerrF},  an upper bound of the error for truncating the SOG series at $\ell=M$ is, 
	\begin{equation}\label{eqeq::inequ}
		\begin{split}
			\left|U_{\text{err}}^{\mathcal{F}}\right|
			=&\sum_{|\bm{k}|\neq0}\pi^{3/2}\dfrac{|\rho(\bm{k})|^2}{2V}\sum_{\ell=M+1}^{\infty}\omega_{\ell}s_{\ell}^3e^{-s_{\ell}^2|\bm{k}|^2/4}\\
			=&\dfrac{b^{2M}}{2V}\sum_{|\bm{k}|\neq0}\pi^{3/2}\dfrac{|\rho(\bm{k})|^2}{2V}\sum_{\ell=1}^{\infty}\omega_{\ell}s_{\ell}^3e^{-s_{\ell}^2|\bm{k}|^2/4}e^{-(b^{2M}-1)s_{\ell}^2|\bm{k}|^2/4}\\
			\leq& b^{2M}e^{-2(b^{2M}-1)(\pi b \sigma/\mathcal{L})^2}\left|U_{\mathcal{F}}^{*}\right|,
		\end{split}
	\end{equation}
	where the second identity is due to the use of the recursive relations with the index $\ell$ switching from $M+1$ to 0, and the inequality follows from $|\bm{k}|\geq 2\pi/\mathcal{L}$ and $\ell\geq 1$. 
\end{proof}

\begin{figure*}[ht]	
	\centering
	\includegraphics[width=0.7\textwidth]{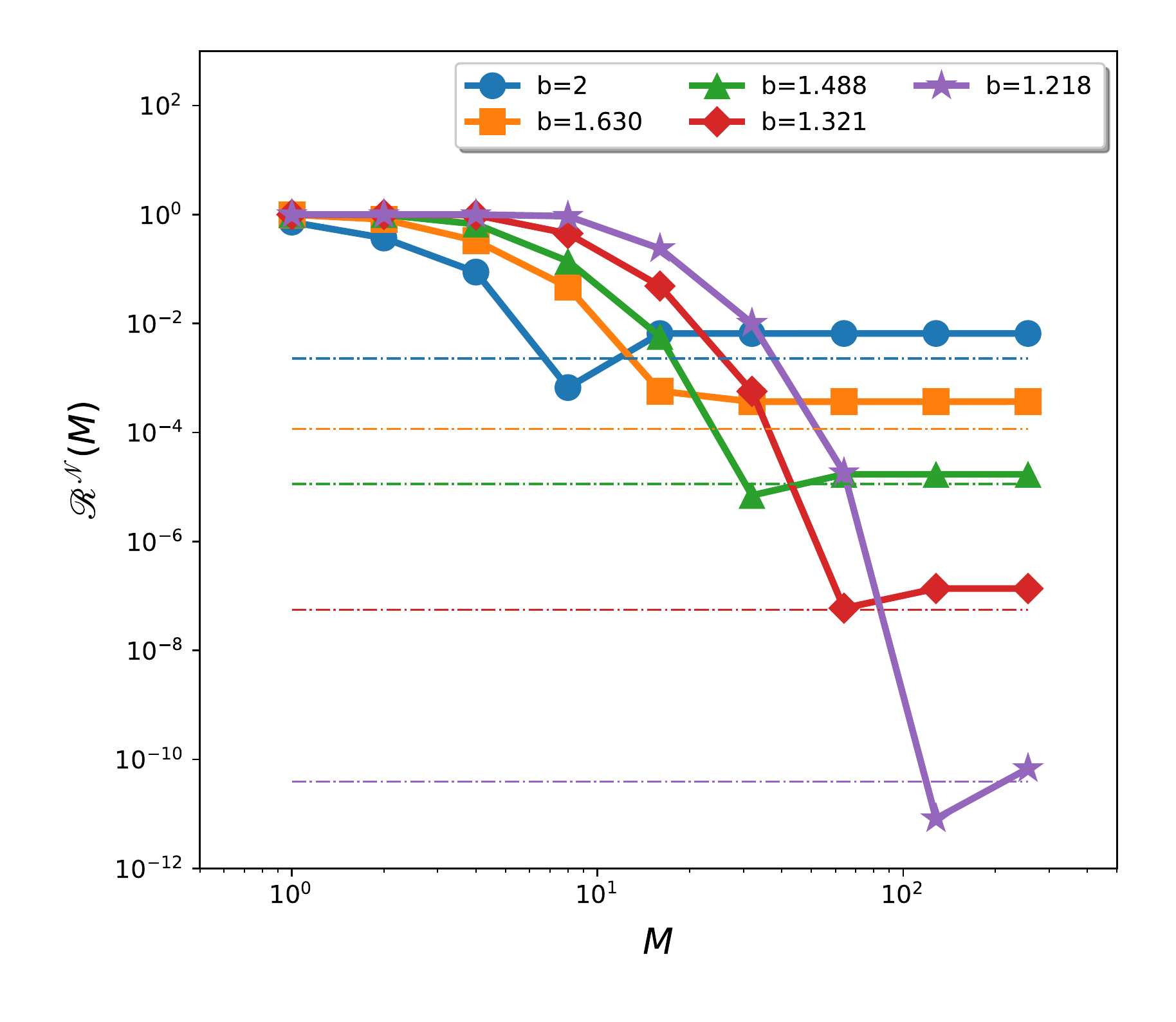}
	\caption{ $U_{\text{err}}^{\mathcal{N}}$ as a function of the number of truncated terms, $M$, for different $b$. Here $r_c$ is fixed as $1nm$ and the corresponding $\sigma$ are shown in Table \ref{tabl:parameter}. The dash-dotted lines indicate the corresponding pointwise error given in Eq.\eqref{eq::pointwiseerror}.}
	\label{fig:madelung}
\end{figure*}

For the near-field truncation error $U_{\text{err}}^{\mathcal{N}}$, a theoretical estimate remains open. It is known in the liquid-state theory \cite{parola1995liquid,weeks1971role} that the fluctuation in the long-range force is not sensitive to affect the dynamical properties. Thus, one would expect that the relative error $\mathscr{R}^{\mathcal{N}}(M)$ will be at the same level as the uniformly bounded pointwise error by Eq.\eqref{eq::pointwiseerror} upon $M$ large enough. To validate it,  we conduct numerical calculation of a NaCl cubic-like crystalline lattice with lattice size $4nm$, where the periodic images can be accurately calculated until convergence. The results are displayed in Fig. \ref{fig:madelung}. Here $r_c=1nm$ is set.  The results clearly illustrate the following error estimate:   
\begin{equation}\label{eq::errorshort} \lim_{M\rightarrow\infty}\mathscr{R}^{\mathcal{N}}(M)\approx2\sqrt{2}e^{-\pi^2/2\ln b}
\end{equation}
in ageement with the pointwise error \eqref{eq::pointwiseerror}.

One can determine the parameters of the SOG decomposition by the error estimates \eqref{eqeq::inequ} and \eqref{eq::errorshort} using the following procedure. Let the cutoff radius $r_c$ and the error tolerance $\varepsilon_{\text{rel}}$ be given apriori. Select $b$ and $\sigma$ such that  conditions $r_c\mathcal{F}_{b}^{\sigma}(r_c)-1=0$ and Eq.\eqref{condition2} are satisfied, where $b$ also satisfies Eq. \eqref{eq::errorshort}. The truncation number $M$ for the SOG is chosen  such that $
\max\left\{\mathscr{R}^{\mathcal{N}}(M),~\mathscr{R}^{\mathcal{F}}(M)\right\}\leq \varepsilon_{\text{rel}}.
$
Table \ref{tabl:parameter} shows the errors for five groups of parameters with respect to $r_c=1nm$ which is often used for practical simulations. We remark that the choice of $M$ in Table \ref{tabl:parameter} is slightly larger than those in Ref.~\cite{DEShaw2020JCP}, since we also consider the influence of $U_{\text{err}}^{\mathcal{N}}$. One shall also note that a large $M$ has only minor effects on the algorithm efficiency in both real space and Fourier space by using advanced implementation techniques described in Section \ref{subsec::implementation}.

\renewcommand\arraystretch{1.4}
\begin{table}[!ht]
	\caption{Parameter sets for the SOG decomposition with $C^{1}$ continuity of the kernel at $r_c=1nm$. $M$ is the minimum number of terms satisfying the error criteria. }
	\centering
	\begin{tabular}{ccccc}
		\hline
		$b$&$\sigma$&$w_0$& $\varepsilon_{\text{rel}}$&$M$ \\\hline
		$2$&$5.027010924194599$&$0.994446492762232252$&$2.289e-3$ & $6$\\\hline
		$1.62976708826776469$&$3.633717409009413$&$1.00780697934380681$ &$1.158e-4$& $16$\\\hline
		$1.48783512395703226$&$2.662784519725113$&$0.991911705759818$&$1.142e-5$&$30$\\\hline
		$1.32070036405934420$&$2.277149356440992$&$1.00188914114811980$ &$5.583e-8$& $64$ \\\hline
		$1.21812525709410644$&$1.774456369233284$&$1.00090146156033341$ &$3.389e-11$& $102$ \\\hline
	\end{tabular}
	\label{tabl:parameter}
\end{table}

\subsection{Convergence and algorithm complexity} \label{subsec::consistency}
We provide some theoretic evidence for the convergence and complexity of the RBSOG algorithm. We will prove that the MD simulation using stochastic force $\bm{F}_{\mathcal{F},i}^{*}$ by Eq. \eqref{eq::approximateForce} can well approximate the results of using exact force $\bm{F}_{\mathcal{F},i}$.

Denote the fluctuation of the random batch approximation for the Fourier part of the force on particle $i$ by
\begin{equation}
	\bm{\Xi}_i=\bm{F}_{\mathcal{F},i}^{*}-\bm{F}_{\mathcal{F},i}.
\end{equation}
By a direct calculation of its expactation and valance, one has the following Lemma \ref{forcebias}.

\begin{lemma}\label{forcebias}
	The fluctuation in force $\bm{\Xi}_i$ is unbiased, i.e., 
	$	\mathbb{E}\boldsymbol{\Xi_i}=\boldsymbol{0},$
	and the variance is expressed by,
	\begin{equation}\label{pro2}
		\mathbb{E}|\bm{\Xi}_i|^2=\frac{1}{P}\left[\frac{4\pi^3q_i^2S|\boldsymbol{k}|^2}{V^2}\sum_{\bm{k}\neq 0}\widetilde{\mathcal{F}}_{b}^{\sigma}(|\bm{k}|)\left|\mathrm{Im}\left(e^{-i\boldsymbol{k}\cdot\boldsymbol{r_i}}\cdot\rho(\boldsymbol{k})\right)\right|^2 -\left|\bm{F}_{\mathcal{F},i}  \right|^2\right].
	\end{equation}
\end{lemma}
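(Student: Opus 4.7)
The plan is to exploit the fact that the samples $\{\bm{k}_\eta\}_{\eta=1}^{P}$ are independent and identically distributed draws from the importance-sampling law $\mathscr{P}_{\bm{k}}$ in Eq.~\eqref{distribution}. Introduce the single-sample random vector
$$\bm{Y}(\bm{k}) := \frac{\pi^{3/2}\,S\,q_i\bm{k}}{V}\,\mathrm{Im}\!\left(e^{-i\bm{k}\cdot\bm{r}_i}\rho(\bm{k})\right),$$
so that Eq.~\eqref{eq::approximateForce} can be rewritten as $\bm{F}_{\mathcal{F},i}^{*}=\tfrac{1}{P}\sum_{\eta=1}^{P}\bm{Y}(\bm{k}_\eta)$. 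By linearity of expectation and the i.i.d.\ assumption on the $\bm{k}_\eta$, the problem reduces to computing $\mathbb{E}\bm{Y}$ and $\mathbb{E}|\bm{Y}|^2$ under the law $\mathscr{P}_{\bm{k}}$ via the law of the unconscious statistician.

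For the unbiasedness statement, I would first compute
$$\mathbb{E}[\bm{F}_{\mathcal{F},i}^{*}] = \mathbb{E}\bm{Y} = \sum_{\bm{k}\neq\bm{0}}\mathscr{P}_{\bm{k}}\,\bm{Y}(\bm{k}) = \sum_{\bm{k}\neq\bm{0}}\frac{\widetilde{\mathcal{F}}_b^{\sigma}(\bm{k})}{\pi^{3/2}S}\cdot\frac{\pi^{3/2}S\,q_i\bm{k}}{V}\,\mathrm{Im}\!\left(e^{-i\bm{k}\cdot\bm{r}_i}\rho(\bm{k})\right).$$
The normalization factor $\pi^{3/2}S$ cancels exactly, and the resulting sum is precisely the definition of $\bm{F}_{\mathcal{F},i}$ given in Eq.~\eqref{eq::forcesplit}. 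This yields $\mathbb{E}\bm{\Xi}_i=\bm{0}$.

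For the variance, I would use independence of the $P$ samples to write
$$\mathbb{E}|\bm{\Xi}_i|^2 = \mathrm{Var}(\bm{F}_{\mathcal{F},i}^{*}) = \frac{1}{P}\,\mathrm{Var}(\bm{Y}) = \frac{1}{P}\Bigl[\mathbb{E}|\bm{Y}|^2 - |\mathbb{E}\bm{Y}|^2\Bigr],$$
where the second equality is the standard $\mathrm{Var}$ reduction for a sum of i.i.d.\ variables, and $|\mathbb{E}\bm{Y}|^2=|\bm{F}_{\mathcal{F},i}|^2$ by the previous step. Expanding $\mathbb{E}|\bm{Y}|^2$ against $\mathscr{P}_{\bm{k}}$ cancels one power of $\pi^{3/2}S$ from the density against the squared prefactor, leaving one surviving factor of $\pi^{3/2}S$ multiplying the sum $\sum_{\bm{k}\neq\bm{0}}\widetilde{\mathcal{F}}_b^{\sigma}(\bm{k})|\bm{k}|^2|\mathrm{Im}(e^{-i\bm{k}\cdot\bm{r}_i}\rho(\bm{k}))|^2/V^2$, which is exactly the first bracketed term in Eq.~\eqref{pro2}.

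There is no substantive obstacle here; the argument is purely bookkeeping. The only care required is to track the factors $\pi^{3/2}$ and $S$ that appear in both $\mathscr{P}_{\bm{k}}$ and the prefactor of $\bm{Y}$: these are tuned precisely so that one power cancels in the mean (giving unbiasedness) and exactly one power survives in the second moment (giving the first variance term).
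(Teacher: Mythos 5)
Your proposal is correct and is precisely the ``direct calculation'' the paper invokes but does not write out: treat the $P$ frequencies as i.i.d.\ draws from $\mathscr{P}_{\bm{k}}$, compute the single-sample mean by the law of the unconscious statistician (the factor $\pi^{3/2}S$ in the prefactor cancels against the normalization of $\mathscr{P}_{\bm{k}}$, giving unbiasedness), and apply $\mathbb{E}|\bm{\Xi}_i|^2=\tfrac{1}{P}\bigl(\mathbb{E}|\bm{Y}|^2-|\mathbb{E}\bm{Y}|^2\bigr)$ for the variance. One bookkeeping remark: with the paper's definitions, $\mathbb{E}|\bm{Y}|^2$ carries a surviving factor $\pi^{3/2}S$ rather than the $4\pi^{3}S$ printed in Eq.~\eqref{pro2}, and the factor $|\bm{k}|^2$ must of course sit inside the sum over $\bm{k}$; both appear to be slips in the lemma's display rather than defects in your argument, and neither affects the $\mathcal{O}(1/P)$ scaling used downstream in Theorem~\ref{theorem::variance}. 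The only implicit assumption you share with the paper is treating the Metropolis--Hastings output as exact independent samples from $\mathscr{P}_{\bm{k}}$, which is how the paper itself reasons.
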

The unbiased property in Lemma \ref{forcebias} implies consistency of the random batch sampling. Eq.\eqref{pro2} illustrates that the variance of $\bm{F}_{\mathcal{F},i}^{*}$ scales as $\mathcal{O}(P^{-1})$. More precisely, we have the following theorem under the mean-field assumption.
\begin{theorem}\label{theorem::variance} Let $\rho_r=N/V$ be the density of particles. Under the Debye-H\"uckel (DH) theory, the variance of the random force scales as $\mathcal{O}(1/P)$, which is independent of both the number of particles $N$ and the bandwidths of Gaussians.
\end{theorem}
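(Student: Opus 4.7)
My plan is to start from the exact variance formula in Lemma~\ref{forcebias} and take the configurational average under the Debye--H\"uckel (DH) closure, then convert the Fourier sum to an integral in the thermodynamic limit and control the resulting integral uniformly in the Gaussian bandwidths $\{s_{\ell}\}$.

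\textbf{Step 1 (rewrite the variance).} Expand $e^{-i\bm{k}\cdot\bm{r}_i}\rho(\bm{k})=q_i+\sum_{j\neq i}q_je^{i\bm{k}\cdot(\bm{r}_j-\bm{r}_i)}$. The self term is real, so only the cross part contributes to $\mathrm{Im}(\cdot)$. Squaring and averaging over configurations yields a double sum whose configurational average is expressible through the charge--charge structure factor centred at particle $i$.

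\textbf{Step 2 (invoke DH closure).} Under the DH theory, the reduced charge--charge correlation around a test particle is known in closed form via the screened Coulomb propagator; equivalently, the bulk charge structure factor satisfies
\begin{equation*}
\langle|\rho(\bm{k})|^2\rangle\;\approx\;\rho_r\langle q^2\rangle V\cdot\frac{k^2}{k^2+\kappa_D^2},
\end{equation*}
where $\kappa_D$ is the Debye wavenumber. Substituting this into Lemma~\ref{forcebias} (and using $|\mathrm{Im}(\cdot)|^2\le\tfrac12|e^{-i\bm{k}\cdot\bm{r}_i}\rho(\bm{k})|^2$ on average) gives
\begin{equation*}
\mathbb{E}|\bm{\Xi}_i|^2\;\lesssim\;\frac{2\pi^3 q_i^2 S\,\rho_r\langle q^2\rangle}{VP}\sum_{\bm{k}\neq\bm{0}}\widetilde{\mathcal{F}}_b^{\sigma}(\bm{k})\,|\bm{k}|^2\cdot\frac{k^2}{k^2+\kappa_D^2}.
\end{equation*}

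\textbf{Step 3 (sum to integral and bandwidth-uniform bound).} In the thermodynamic limit $(V\to\infty)$ the Fourier sum becomes $V(2\pi)^{-3}\int d\bm{k}$, which cancels the remaining $V$ in the denominator; this is where the independence from $N$ (at fixed density) is made explicit. Using the explicit SOG expression \eqref{FourierSOG} for $\widetilde{\mathcal{F}}_b^{\sigma}(\bm{k})$, we are left with a sum over $\ell$ of radial integrals of the form
\begin{equation*}
I_{\ell}=\int_0^{\infty}\omega_\ell s_\ell^3\,e^{-s_\ell^2k^2/4}\cdot\frac{k^6}{k^2+\kappa_D^2}\,dk.
\end{equation*}
A change of variables $u=s_\ell k$ shows $I_\ell\le\omega_\ell\,C$ for an absolute constant $C$ independent of $s_\ell$ (the factor $k^2/(k^2+\kappa_D^2)\le 1$ removes any singular low-$k$ behaviour, and the Gaussian gives uniform decay at high $k$). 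Summing in $\ell$ with the geometric weights $\omega_\ell=(\pi/2)^{-1/2}b^{-\ell}\sigma^{-1}\ln b$ yields a finite quantity independent of the bandwidths $\{s_\ell\}$ (it depends only on $b,\sigma$ through the overall normalization that defines the kernel).

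\textbf{Step 4 (assemble).} Combining the $1/P$ prefactor from Lemma~\ref{forcebias}, the cancellation of one power of $V$ through the density, and the uniform bound on the radial sum yields the claimed bound $\mathbb{E}|\bm\Xi_i|^2=\mathcal{O}(1/P)$ with a constant that depends only on $q_i$, $\rho_r\langle q^2\rangle$, $\kappa_D$, and the SOG parameters $(b,\sigma)$ but not on $N$ or on the individual bandwidths $s_\ell$.

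\textbf{Main obstacle.} The delicate step is Step~2: the DH closure is only a mean-field approximation, so rigorously replacing $|\rho(\bm{k})|^2$ by its DH form requires careful control (and is the reason the statement is prefaced by ``under the DH theory''). Once that substitution is accepted, the remaining work is to ensure the bandwidth-uniform estimate of $I_\ell$ survives summation over $\ell$; this is where the factor $k^2/(k^2+\kappa_D^2)$ is critical, as it tames the $|\bm{k}|^2$ weight that would otherwise render the long-wavelength behaviour bandwidth-sensitive.
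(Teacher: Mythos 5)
There is a genuine gap in your Steps 2--3, and it sits exactly where the theorem lives: the $N$-independence. You import the Debye--H\"uckel input as a statement about the \emph{bulk} structure factor, $\langle|\rho(\bm{k})|^2\rangle\approx \rho_r\langle q^2\rangle V\,k^2/(k^2+\kappa_D^2)$, which is $\mathcal{O}(N)$, and then use $|\mathrm{Im}(e^{-i\bm{k}\cdot\bm{r}_i}\rho(\bm{k}))|^2\lesssim\tfrac12|\rho(\bm{k})|^2$. Now track volume factors: the prefactor in \eqref{pro2} is $S/V^2$ with $S=\mathcal{O}(V)$ (this is \eqref{Sestimate}), your substitution supplies one more factor of $V$ through $\rho_r\langle q^2\rangle V$, and the sum-to-integral conversion supplies yet another factor $V/(2\pi)^3$. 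That is $V\cdot V\cdot V/V^2=\mathcal{O}(V)$, so your route delivers $\mathbb{E}|\bm{\Xi}_i|^2=\mathcal{O}(N/P)$, not $\mathcal{O}(1/P)$: the cancellation of $V$ you invoke in Step 3 is counted twice. The paper's proof takes the DH input in a different and stronger form, namely that the screened field seen by the tagged particle satisfies $|\mathrm{Im}(e^{-i\bm{k}\cdot\bm{r}_i}\rho(\bm{k}))|^2\le C$ with $C$ independent of $N$; it is precisely this $\mathcal{O}(1)$ bound on the \emph{conditional} quantity $\sum_{j}q_j\sin(\bm{k}\cdot\bm{r}_{ji})$ that produces the $N$-uniform constant. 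The crude inequality $|\mathrm{Im}(\cdot)|^2\le\tfrac12|\rho|^2$ discards exactly the cancellation that screening is supposed to provide, so the bulk structure factor alone cannot close the argument.

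A secondary error is the claim $I_\ell\le\omega_\ell C$ with $C$ an absolute constant. The substitution $u=s_\ell k$ gives $I_\ell=\omega_\ell s_\ell^{-2}\int_0^{\infty}u^{6}e^{-u^2/4}\bigl(u^2+s_\ell^2\kappa_D^2\bigr)^{-1}du\lesssim\omega_\ell s_\ell^{-2}$; the integral carries dimensions of inverse length squared and cannot be bounded by $\omega_\ell$ times a dimensionless constant. The geometric sum $\sum_\ell\omega_\ell s_\ell^{-2}$ still converges, so this does not affect the $\mathcal{O}(1/P)$ rate, but the resulting constant scales like $\sigma^{-3}$ and is therefore not uniform in the narrowest bandwidth as you assert. (For comparison, the paper evaluates $\sum_{\bm{k}}\widetilde{\mathcal{F}}_b^{\sigma}(\bm{k})$ via \eqref{FourierSOG} without the extra $|\bm{k}|^2$ weight, obtaining $\frac{S}{P}\frac{4\sqrt{\pi}q_i^2C}{V}\sum_\ell w_\ell$ and then $S=\mathcal{O}(V)$; whatever one thinks of that step, your uniform-in-$s_\ell$ bound on $I_\ell$ as written is not correct and should be replaced by the $\omega_\ell s_\ell^{-2}$ estimate before summing in $\ell$.)
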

\begin{proof}
	By the Debye-H\"uckel theory, the term of the structure factor in Eq. \eqref{pro2} can be bounded by a constant  $C$ \cite{jin2020random},  
	\begin{equation}
		\left| \mathrm{Im}\left(e^{-i\boldsymbol{k}\cdot\boldsymbol{r_i}}\cdot\rho(\boldsymbol{k})\right)\right|^2\leq C.
	\end{equation}
	By this inequality, one has 
	\begin{equation}\label{varianceassump}
		\begin{aligned}
			\mathbb{E}|\boldsymbol{\Xi_i}|^2&\le \frac{1}{P}\left\{\frac{4\pi^3q_i^2CS|\bm{k}|^2}{V^2}\cdot\sum_{\bm{k}\neq 0}\widetilde{\mathcal{F}}_{b}^{\sigma}(|\bm{k}|)-\left|\bm{F}_{\mathcal{F},i}\right|^2\right\}
			\\&\lesssim \frac{S}{P}\frac{4 \pi^2q_i^2C}{V^2}\sum_{\ell=0}^{M}\omega_{\ell}s_{\ell}^3\int_{0}^{\infty}\frac{V}{(2\pi)^3}\cdot 4\pi k^2\cdot e^{-\frac{1}{4}s_{\ell}^2 k^2}dk\\
			&=\frac{S}{P}\frac{4\sqrt{\pi}q_i^2C}{V}\sum_{\ell=0}^{M}w_{\ell}.\\
		\end{aligned}
	\end{equation}
	Here, by the definition Eq. \eqref{eq::S},  $S$ has the following estimate: 
	\begin{equation}\label{Sestimate} 
			S=\pi^{-3/2}V\sum_{\ell=0}^{M}\omega_{\ell}\sum_{m_d\in\mathbb{Z}}e^{-\pi^2\sum\limits_{d}m_d^2L_d^2/s_{\ell}^2}-\pi^{-3/2}\widetilde{\mathcal{F}}_{b}^{\sigma}(0), 
	\end{equation}
and thus $S=\mathcal{O}(V)$. 	Substituting it into Eq.\eqref{varianceassump} gives  $\mathbb{E}|\boldsymbol{\Xi_i}|^2=\mathcal{O}(1/P)$, and Eq.  \eqref{varianceassump} clearly shows the independence of the estimate on the particle numbers and the Gaussian bandwidths. 
\end{proof}

We now consider the convergence of the MD. Let $\Delta t$ be the time step of the integration methods (for example, the velocity-Verlet algorithm). Let $(\bm{r}_i, \bm{v}_i)$ be the solution of the underdamped Langevin dynamics equations of motion,
 	\begin{equation}\label{exactforce}
 \begin{split}
 &d\bm{r}_i=\bm{v}_idt,\\
 &m_id\bm{v}_i=\left[\bm{F}_{i}-\gamma\bm{v}_i\right]dt+\sqrt{2\gamma/\beta}d\bm{W}_i,
 \end{split}
 \end{equation}
 where $\{\bm{W}_i\}$ are independently identically distributed Wiener processes and $\bm{r}_i$ and $\bm{v}_i$ denote the coordinates and the velocities of the $i$th particle,  $m_i$ is the mass of the particle, $\gamma$ is the reciprocal characteristic time associated with the thermostat, $\beta=1/k_{\text{B}}T$ is the reciprocal of thermal energy with $k_{\text{B}}$ the Boltzmann constant. Let $(\bm{r}_i^*,\bm{v}_i^*)$ be the solution with the force $\bm{F}_i$ approximated by $\bm{F}_i+\bm{\Xi}_i$ 
 with the same initial data. 
 Let us define the Wasserstein-2 distance \cite{santambrogio2015optimal} as 
 \begin{equation}
 \mathfrak{W}_2(\mu,\nu)=\left(\inf_{\gamma\in\Pi(\mu,\nu)}\int_{\mathbb{R}^3\times\mathbb{R}^3}|\bm{x}-\bm{y}|d\gamma\right)^{1/2},
 \end{equation}
 where $\Pi(\mu,\nu)$ is the adjoint distribution with marginal distributions $\mu$ and $\nu$, respectively.
Theorem \ref{thm:wasser} indicates that RBSOG under the Langevin thermostat \cite{Frenkel2001Understanding} is valid in capturing the finite time structure and dynamic properties. The proof of Theorem \ref{thm:wasser} can be obtained by simply following those in previous work \cite{li2020random,liang2021random2}, and will not present here. 

\begin{theorem}\label{thm:wasser} 
Suppose that the forces $\bm{F}_i$ are bounded and Lipschitz and $\mathbb{E}\bm{\Xi}_i=0$. Let $\bm{Q}$ be the initial configuration of the system, and denote $Y(\bm{Q},\cdot)$ and $Y^*(\bm{Q},\cdot)$ by the transition probabilities of the SDEs driven by the exact force Eq.\eqref{exactforce} and the RBSOG stochastic force, respectively. Then, for any time $t_*>0$, there exists a constant $C(t_*)$ independent of $N$ such that the Wasserstain-2 distance of the two probabilities has the following bound:
	\begin{equation}\label{eq::err}
		\sup_{\boldsymbol{R}}\mathfrak{W}_2(Y,Y^*)\le C(t_*)\sqrt{\Lambda\Delta t+(1+D^2)\Delta t^2}
	\end{equation}
	where $D=\gamma/\beta$ and $\Lambda=\|\mathbb{E}|\bm{\Xi}|^2\|_{\infty}$.
\end{theorem}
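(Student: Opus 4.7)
The plan is to establish a synchronous coupling between the exact and approximate dynamics and then split the error into a random-batch contribution and a time-discretization contribution, which combine through a Grönwall estimate. Concretely, I would drive both $(\bm{r}_i,\bm{v}_i)$ and its RBSOG counterpart $(\bm{r}_i^*,\bm{v}_i^*)$ by the same Brownian increments $\{\bm{W}_i\}$ from the same initial data. Under this coupling the bound $\mathfrak{W}_2^2(Y,Y^*)\le \mathbb{E}\sum_i(|\bm{r}_i-\bm{r}_i^*|^2+|\bm{v}_i-\bm{v}_i^*|^2)$ holds by definition of the Wasserstein metric, reducing the statement to an $L^2$ estimate on the coupled trajectories. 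Writing $\delta\bm{r}_i=\bm{r}_i-\bm{r}_i^*$ and $\delta\bm{v}_i=\bm{v}_i-\bm{v}_i^*$, the noise terms cancel and only the difference of drifts $(\bm{F}_i(\bm{r})-\bm{F}_i(\bm{r}^*))-\bm{\Xi}_i$ and the integrator truncation error survive.

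For the batch contribution, I would exploit the Lipschitz assumption on $\bm{F}_i$ to convert the first piece of the drift difference into a linear contraction in $\mathbb{E}\sum_i|\delta\bm{r}_i|^2$. For the $\bm{\Xi}_i$ piece, the key facts are $\mathbb{E}\bm{\Xi}_i=\bm{0}$ from Lemma \ref{forcebias} and $\mathbb{E}|\bm{\Xi}_i|^2\le \Lambda$ from Theorem \ref{theorem::variance}, together with the independence of the batches drawn at different timesteps. Conditioning on the filtration generated by the Wiener process and all past batches turns the cumulative contribution of the stochastic fluctuations into a sum of martingale increments, so the one-step variance of order $\Lambda(\Delta t)^2$ aggregates additively rather than via triangle inequalities, giving $\mathbb{E}\sum_i|\delta\bm{v}_i|^2=\mathcal{O}(\Lambda\Delta t)$ on a bounded time horizon; integrating in time transfers the same order to the position error. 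The $(1+D^2)\Delta t^2$ contribution then comes from a standard local-truncation analysis of the Langevin integrator: a Taylor expansion of the splitting scheme exhibits leading-order terms scaling with $\gamma$ and $\gamma/\beta=D$, and summation of local errors over $t_*/\Delta t$ steps combined with strong stability produces the $(1+D^2)\Delta t^2$ bound on the squared error.

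Assembling both contributions inside Grönwall's inequality yields $\mathbb{E}\sum_i(|\delta\bm{r}_i|^2+|\delta\bm{v}_i|^2)\le C(t_*)\bigl[\Lambda\Delta t+(1+D^2)\Delta t^2\bigr]$, after which the stated Wasserstein bound follows by taking square roots and a supremum over the initial configuration. The main obstacle is the correct martingale-type bookkeeping of the batch noise across time: a naive pointwise bound on $|\bm{\Xi}_i|$ in terms of $\sqrt{\Lambda}$ would produce $\mathcal{O}(\sqrt{\Lambda}\,t_*)$, losing the crucial $\sqrt{\Delta t}$ gain. It is precisely the unbiasedness $\mathbb{E}[\bm{\Xi}_i\,|\,\mathcal{F}_{n-1}]=\bm{0}$ and the cross-step independence that let the variance, not the $L^2$ norm, drive the accumulation, producing the square-root scaling in $\Delta t$. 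Once this conditional-expectation argument is in place, the remaining analysis is standard and parallels the proofs of \cite{li2020random,liang2021random2}.
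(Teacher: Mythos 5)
Your proposal is essentially the argument the paper relies on: the authors do not write out a proof but explicitly defer to \cite{li2020random,liang2021random2}, and the synchronous coupling, the Lipschitz/Gr\"onwall contraction, and in particular the conditional-unbiasedness (martingale) bookkeeping that upgrades the batch-noise accumulation from $\mathcal{O}(\sqrt{\Lambda}\,t_*)$ to $\mathcal{O}(\sqrt{\Lambda\Delta t})$ are exactly the ingredients of those cited proofs. You have correctly identified the one step that is genuinely delicate, so nothing further is needed.
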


\begin{remark}
	Typically, the Nos\'e--Hoover (NH) thermostat \cite{hoover1985canonical} is often adpoted for the heat bath of the NVT and NPT ensembles, instead of the Langevin thermostat. The rigorous proof for the convergence of the NH thermostat remains open, however, we conjecture that a similar error bound as Eq.~\eqref{eq::err} exists considering that the variance term from the random batch sampling will be well controlled by the damping factor in the NH\cite{Jin2020SISC}.
\end{remark}

We analyze the complexity of the RBSOG method for each time step. Given a cutoff radius  $r_c$, the complexity for the near part is certainly linear to the particle number $N$. The random batch sampling Eq.\eqref{eq::approximateForce} for approximating the force results in $P$ terms of the Fourier modes to be summated. In each step, we need to calculate $P$ structure factors $\rho(\bm{k}_\eta)$ for $\eta=1,\cdots, P$, which are used for all particles, thus the complexity for the far part is $\mathcal{O}(PN)$. Furthermore, Theorems \ref{theorem::variance} and \ref{thm:wasser} have indicated that the error of the RBSOG does not  grow with the increase of $N$ for a fixed density and $P=\mathcal{O}(1)$. By these analysis, the RBSOG method has linear complexity per time step.

\section{Numerical results}\label{sec::numerical}
In this section, we perform  all-atom simulations with the RBSOG-based MD under the NVT ensemble to validate both the accuracy and efficiency of the proposed method, with two benchmark systems including  the bulk water and the LiTFSI ionic liquid. In all the RBSOG results,  the SOG decomposition uses the parameters listed in the second row of Table \ref{tabl:parameter} such that the error is at the level of $10^{-4}$. For the near field, the parameters for the core-shell structured tabulation are set as $\mathscr{Q}=4$, $r_{\text{in}}=0.2nm$, $B_{\text{exp}}=3$ and  $B_{\text{man}}=9$.
All the simulations were conducted by our method implemented in the LAMMPS \cite{plimpton1995fast,thompson2022lammps} (version 29Oct2020), and were performed on the ``Siyuan Mark-I'' cluster at Shanghai Jiao Tong University, which comprises $936$ nodes with $2$ $\times$ Intel Xeon ICX Platinum $8358$ CPU ($2.6$ GHz, $32$ cores) and $512$ GB memory per node.  

\subsection{Accuracy for water systems}\label{subsec::water}
We first perform MD simulations on all-atom bulk water system using the extended simple point charge (SPC/E) \cite{berendsen1987missing} force field to examine the accuracy of the RBSOG, compared to the reference PPPM solutions. The system includes $24327$ SPC/E water molecules confined in a cubic box of initial side
length $9nm$. For each case, a short equilibration run of $200$ $ps$ is first conducted, at reference temperature $298$ $K$, with the integration
step size $\Delta t = 1$ $fs$. The relaxation times are set to $\gamma_i = 0.1$ $ps$ for each particle $i$ together with a NH thermostat. The production phase lasts $2$ $ns$, and the configurations are saved every $200$ steps ($0.2$ $ps$) for statistics. The velocity is initially generated according to a Maxwell distribution function. All chemical bonds are converted to constraints using the SHAKE algorithm \cite{krautler2001fast} to allow a time step of $1$ $fs$.  During the equilibration process, the short-range part of the Coulomb interaction of the PPPM and the LJ interaction each with a cutoff parameter of $1nm$ is considered with periodic boundary conditions and the splitting parameter $\alpha=0.26$. The estimated relative error level of the PPPM is about $10^{-4}$ \cite{kolafa1992cutoff} which is consistency with the estimated level of RBSOG. 

\begin{figure*}[!ht]	
	\centering
	\includegraphics[width=1.0\textwidth]{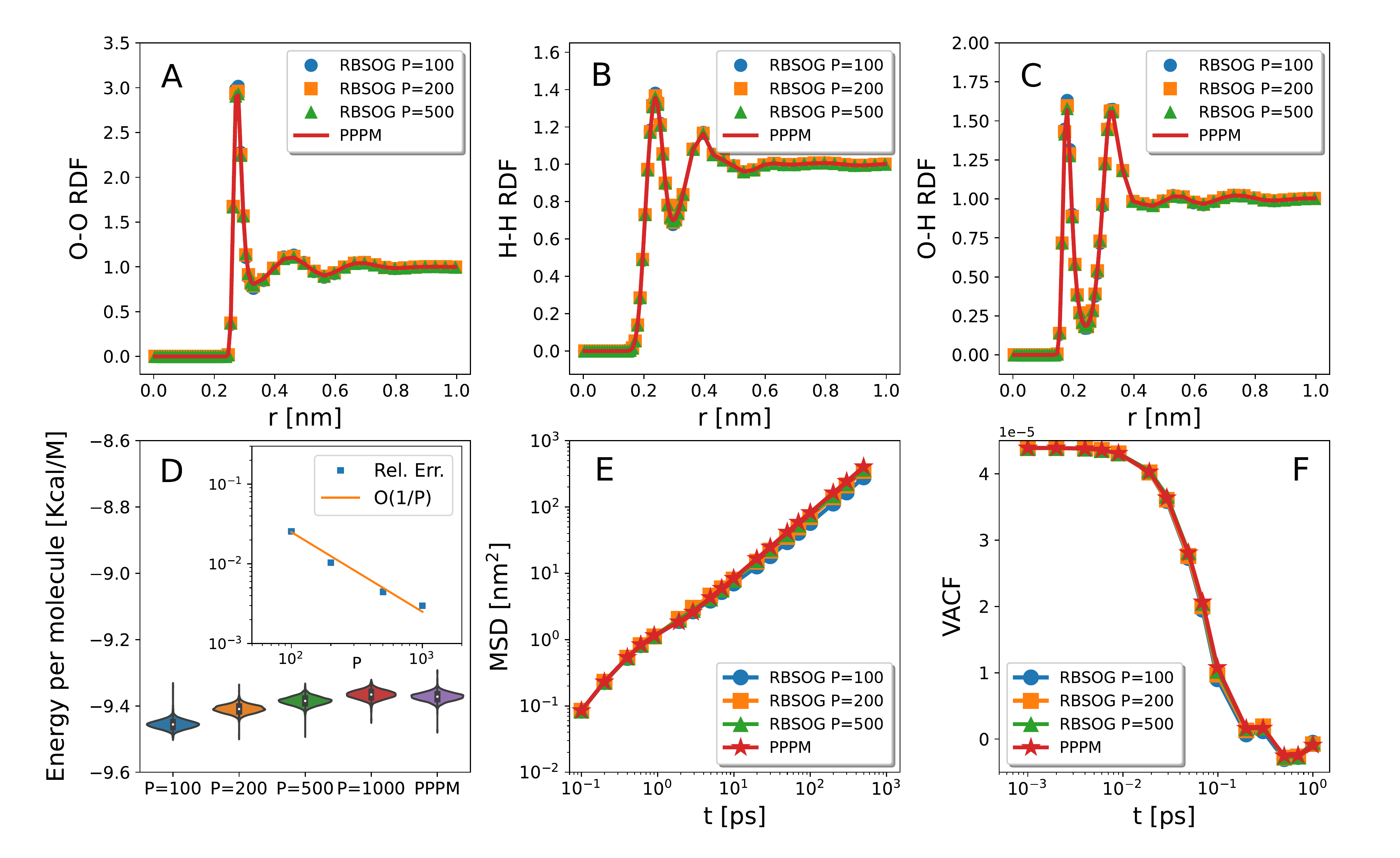}
	\caption{The RDFs of O-O (A), H-H (B), and O-H (C), total energy per molecule (D), MSD (E) and VACF (F) of the bulk water system. The simulation results use the RBSOG method with different batch sizes $P=100, 200$ and $500$, compared to the PPPM. In the violin plot (D), the white point and the two endpoints of black bar within each violin indicate the mean value and two quartiles, respectively. The subplot in (D) shows the convergence on the mean-value errors of the total energy per molecule (including  $P=1000$ data).}
	\label{fig:water1}
\end{figure*}

We measure the properties of the simulation system by the radial distribution function (RDF), the total energy per molecule, the mean square displacement (MSD), and the velocity autocorrelation function (VACF), where the RDF characterizes the equilibrium structure of   water molecules, and the MSD and the VACF are two quantities for measuring the dynamical properties of water. The formulas for these quantities are given in Appendix \ref{appendix::calculation}. The results are shown in Fig. \ref{fig:water1} where panels (A-D) display the RDFs of oxygen-oxygen (O-O), hydrogen-hydrogen (H-H), and oxygen-hydrogen (O-H) and the total energy per molecule. The RBSOG and the PPPM produce statistically identical results on all of the three RDFs. The distributions of Fig.\ref{fig:water1} (D) present the desired Boltzmann distribution, further confirming the accuracy of our RBSOG method. In the subplots of Fig.\ref{fig:water1} (D), the convergence of total energy shows the $\mathcal{O}(P^{-1})$ rate, in consistent with our priori error estimate. Panels  (E) and (F) display the comparisons on the MSD and VACF. The agreement between the RBSOG and the PPPM confirms that dynamical properties are properly reproduced when $P\geq 200$. 
%In addition, the viscosity, the thermal conductivity, the total dipole moment, and the isochoric heat capacity of bulk water in the ensembles of NVT were also calculated from the RBSOG simulations, and the results show quantitative convergence as $P$ increased and agreement with those derived from PPPM. We here do not display these since there have been enough numerical examples for validating the accuracy of our RBSOG.

%\begin{figure*}[ht]	
%	\centering
%	\includegraphics[width=1.0\textwidth]{./fig/Water3.pdf}
%	\caption{The viscosity (A), the thermal conductivity (B), the total dipole moment (C), and the heat capacity (D) of the bulk water system. The simulation results are plotted for the PPPM (violet) and the RBSOG with different $P$, the numbers of $\bm{k}$'s in a batch, including $P=100$ (blue), $P=200$ (orange), $P=500$ (green), and $P=1000$ (red). The accuracy of the RBSOG on producing accurate dynamic and thermal properties is verified.}
%	\label{fig:water3}
% \end{figure*}

\subsection{Accuracy for ionic liquids}
The second example is the LiTFSI ionic liquid system with the optimized potentials for liquid simulations all-atom (OPLS-AA)  force field for Li$^{+}$ \cite{pronk2013gromacs} and  TFSI$^{-}$ \cite{canongia2004molecular}, and the TIP3P model \cite{price2004modified} for water molecules. This benchmark example was studied by the RBE \cite{liang2022superscalability}.  The system is first equilibrated in the NPT ensemble with the
PPPM at $298$ $K$ and $1$ $bar$ for $500$ $ns$, followed by $200$ $ns$ production MD in the NVT using the NH thermostat with the PPPM
and RBSOG, respectively. The system contains $15803$ atoms, including
$320$ Li$^{+}$, $320$ TFSI$^{-}$, and $3561$ H$_2$O. A cubic simulation box of
size $5.67nm$ is initially used with periodic boundary conditions. The setup for the PPPM is the same as that for the water system. The batch size of the RBSOG takes $P=500$. 

\begin{figure*}[ht]	
	\centering
	\includegraphics[width=0.75\textwidth]{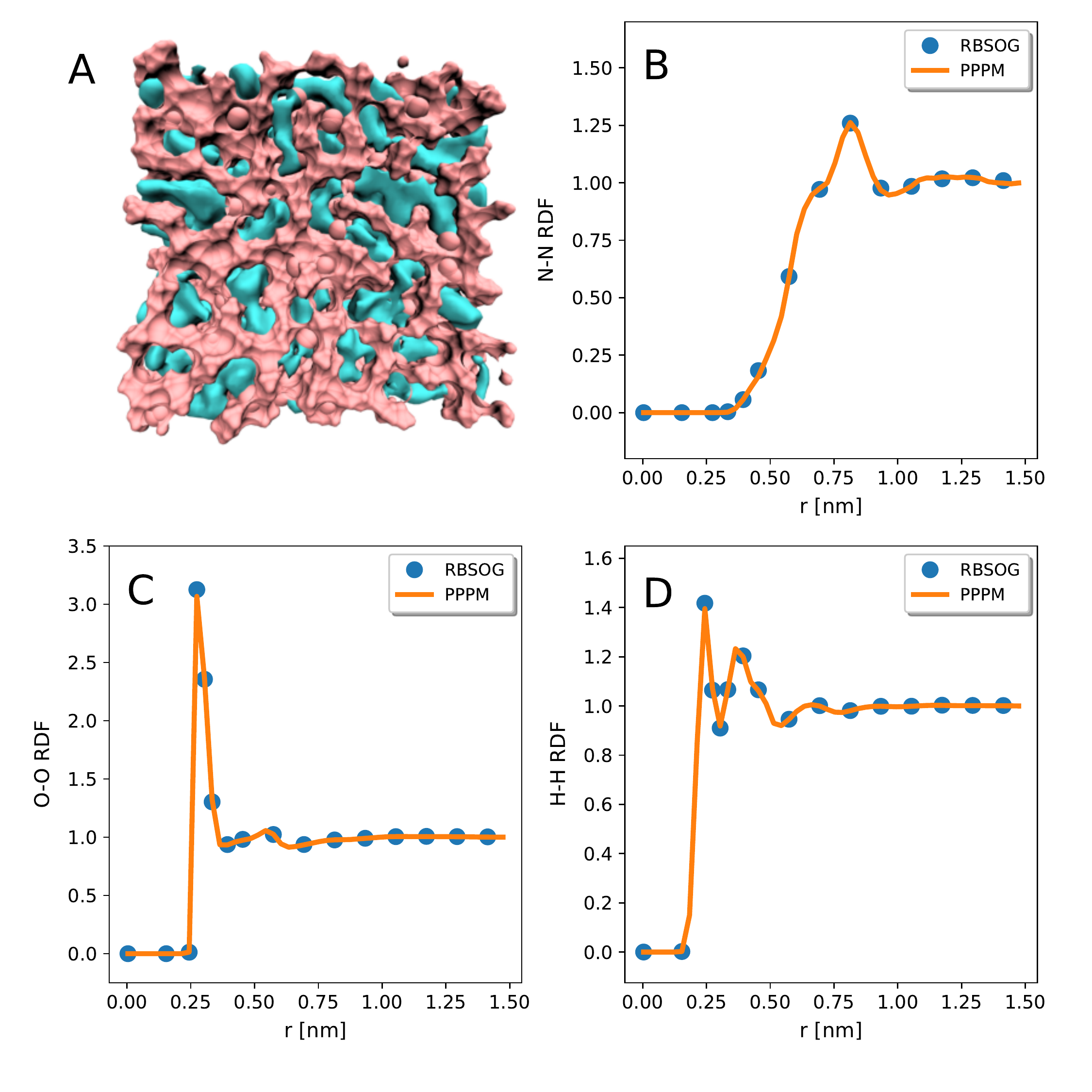}
	\caption{MD simulation results from the RBSOG and the PPPM for the concentrated LiTFSI ionic solution.  (A) MD snapshot of the LiTSFI system; (B-D) The nitrogen-nitrogen, oxygen-oxygen and hydrogen-hydrogen RDFs.}
	\label{fig:rdfmsd}
\end{figure*}

Fig. \ref{fig:rdfmsd}(A) illustrates an MD snapshot for the electrolyte, revealing the nano-heterogeneity of the system. The structural information, i.e., the RDFs of the center atom (nitrogen, oxygen and hydrogen in H$_2$O) of the anions, on this concentrated electrolyte derived from both RBSOG and PPPM are shown in Fig. \ref{fig:rdfmsd}(B-D). The dynamics and fluctuations of the system, including the MSD of oxygen atoms in solvent, total energy, heat conductivity, and viscosity are presented in Fig. \ref{fig:ilmsd} (A-D), respectively. The viscosity is a measure of how viscous a fluid is, and the heat conductivity refers to the ability of liquid to conduct/transfer heat. The calculation methods of the viscosity and heat conductivity are displayed in Appendix \ref{appendix::calculation}. As it can be seen, the spatiotemporal features of the system derived from the two methods are essentially the same. These results indicate that the RBSOG with $P=500$ has comparable accuracy compared with the PPPM at the  $10^{-4}$ error level in this ionic liquid system. 

\begin{figure*}[!ht]	
	\centering
	\includegraphics[width=0.8\textwidth]{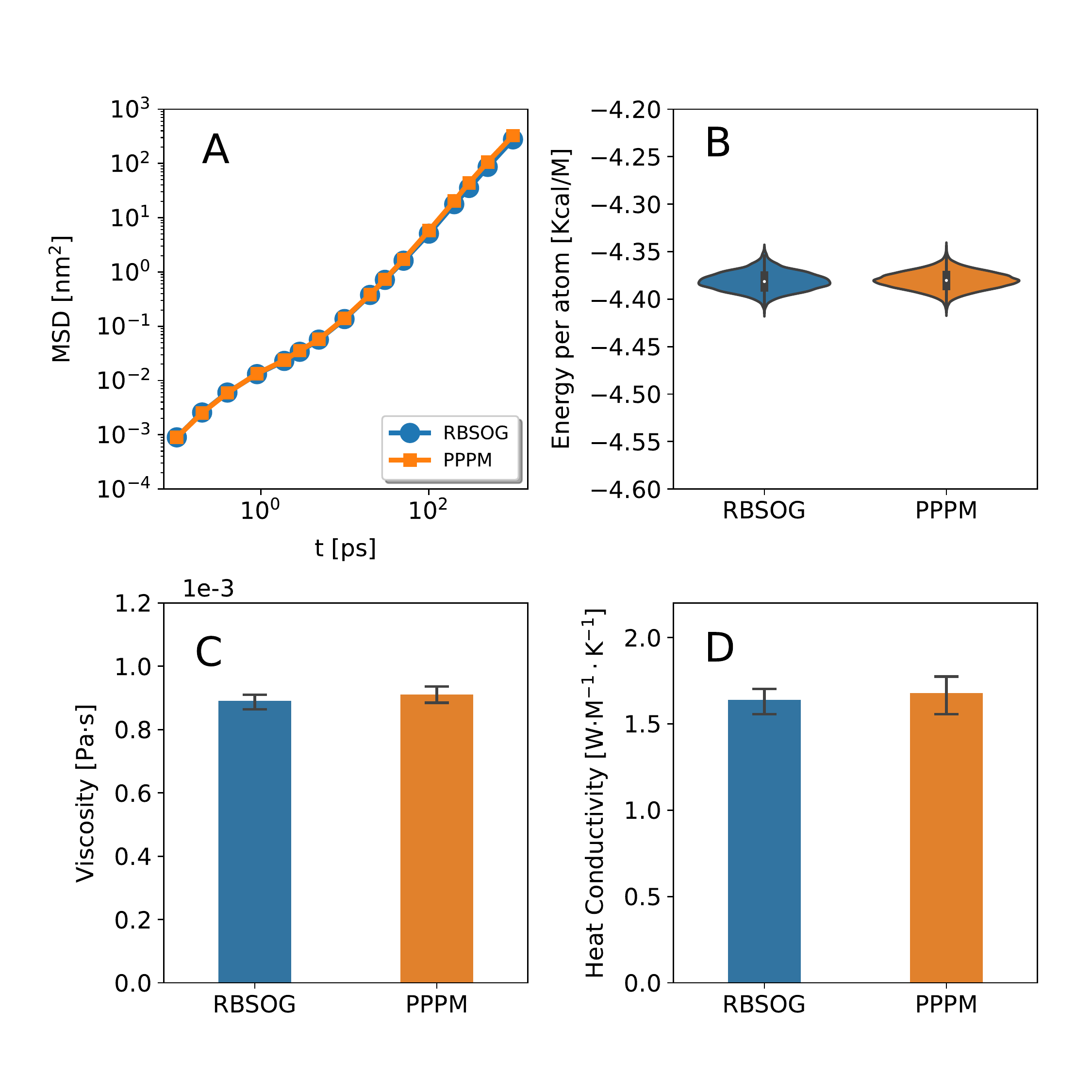}
	\caption{Comparison of dynamical and thermal properties by MD simulations with the PPPM (blue) and RBSOG (pink) for the LiTFSI system.  (A) The O-O MSD of water molecules; (B) the total energy per atom; (C) the thermal conductivity; and (D) the viscosity.}
	\label{fig:ilmsd}
\end{figure*}

\subsection{Time performance of the MD algorithm}

The performance comparisons between the RBSOG and the PPPM were carried out by using LAMMPS on atom simulation of SPC/E pure water systems. To access a fair comparison, the estimated relative force errors is chosen as $10^{-4}$. The parameters of the PPPM are chosen automatically in LAMMPS
based on the error estimates \cite{deserno1998mesh}. The simulations of the system were conducted for $1000$ steps to estimate the average CPU time per step. The density of water molecules is fixed to $1g/cm^3$. The real space cutoff for both Coulomb and LJ potential is set as $r_c=1nm$ for both the RBSOG and PPPM. 

\begin{figure*}[ht]	
	\centering
	\includegraphics[width=0.8\textwidth]{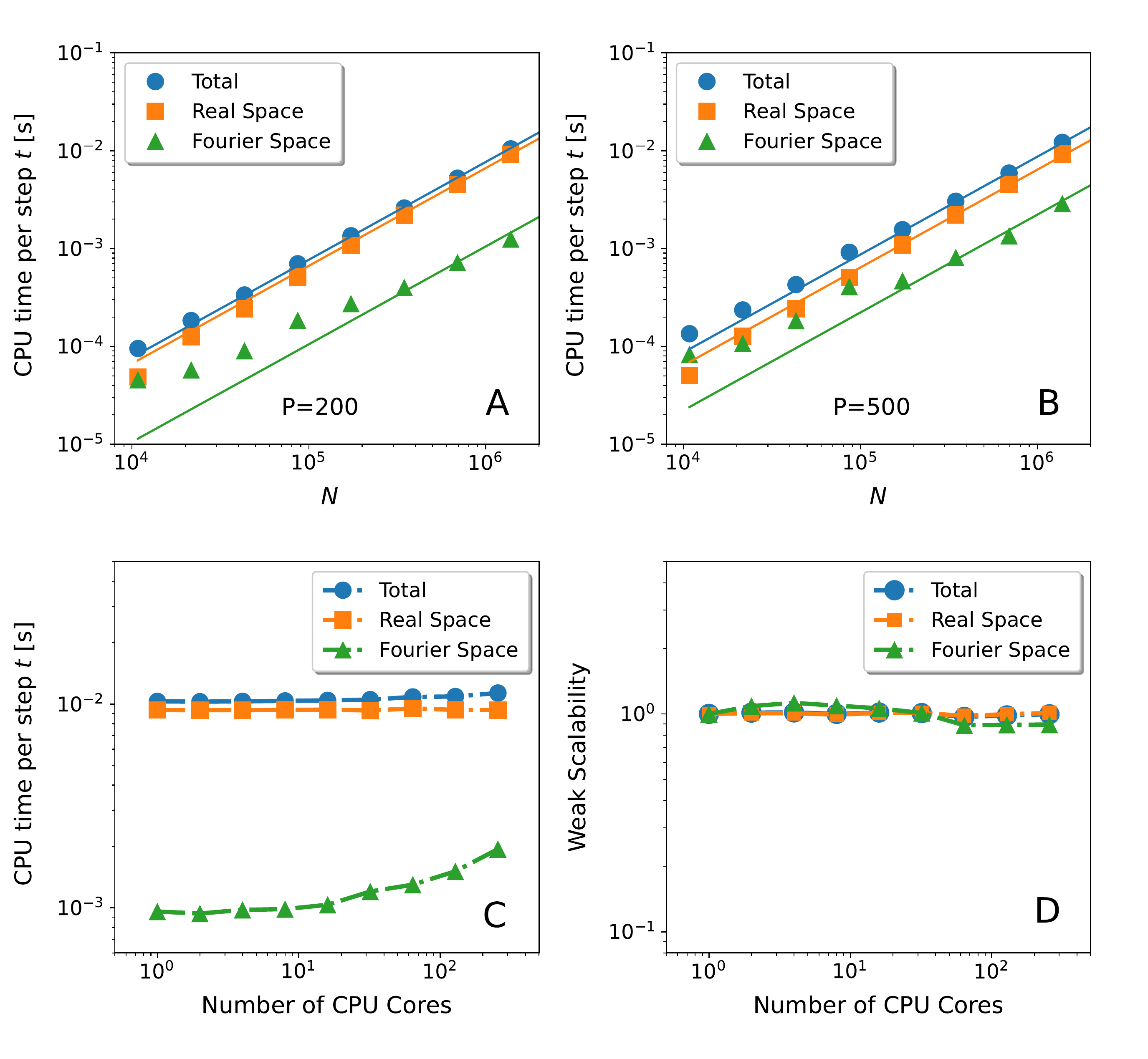}
	\caption{(A-C) CPU time per step for the RBSOG methods with increasing $N$ and (D) weak scalability of the RBSOG methods with  the number of CPU cores. (AB) are the results with 512 cores and with the increase of $N$ for $P=200$ and $500$.  (CD) are the results with the increase of cores and with a fixed average particle  ($2703$ per core).}
	\label{fig:Nscaling}
\end{figure*}

We first characterize the time scaling with increasing number of particles $N$. In Fig. \ref{fig:Nscaling} (A-B), $512$ CPU cores are used, and the computational times for the RBSOG method are shown for system size up to $N=10^6$,  where the solid lines present the linear fitting of the data in log-log scale. The results clearly illustrate   the $\mathcal{O}(N)$ complexity of the RBSOG method. For the FFT-based methods, the proportion of the CPU time of the real space and the Fourier space part are roughly the same. It is observed that the RBSOG has significant computation saving of the Fourier part over a whole range of particle numbers, demonstrating the attractive performance of the algorithm. Note that the first several points of the far part in Fig. \ref{fig:Nscaling} are not linearly scaled, because the number of particles is small and the communication dominates the cost. 

%\begin{figure*}[!ht]	
%	\centering
%	\includegraphics[width=1\textwidth]{./fig/WeakScalability.eps}
%	\caption{Weak scalability of the RBSOG methods with increasing number of CPU cores up to $256$. The average number of particles on one processor is fixed as $2703$ which is comparatively small. Data are shown for (A) CPU time per step and (B) weak scalability, and for different parts including the real space (orange square), the Fourier space (green triangle), and the total CPU time (blue circle).  }
%	\label{fig:Weakscaling}
%\end{figure*}

\begin{figure*}[!ht]	
	\centering
	\includegraphics[width=1\textwidth]{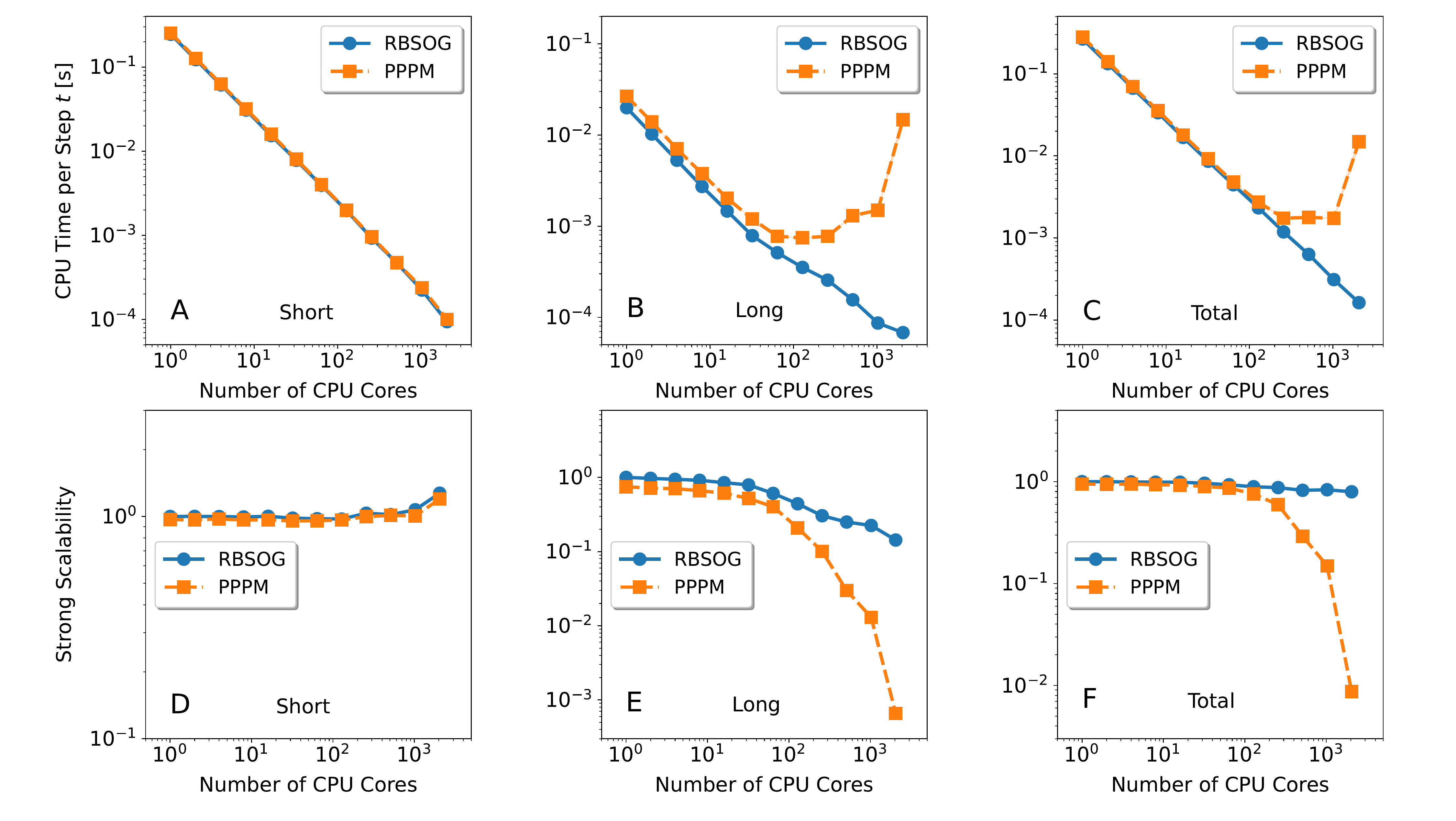}
	\caption{CPU time and strong scalability for comparison between the RBSOG  and the PPPM  with the number of CPU cores up to $2048$. The total number of atoms is $72981$.  Data are shown for CPU time per step (A-C) and strong scalability (D-F) of the near part (short), the  far part (long), and the total force (total). }
	\label{fig:Strongscaling}
\end{figure*}

The algorithm scalability is one of key issues limiting both system scale and time scale of MD simulations. It is well known that the FFT for evaluating Coulomb interactions requires intensive communication costs, which asymptotically takes more than $95\%$ of
runtime on hybrid systems \cite{ayala2021scalability,Arnold2013PRE}. Hence, it is critical to qualify a novel Coulomb solver by examining how the scalability can be improved. In this work, we
measure the scalability of the RBSOG by both the weak scaling and the strong scaling. The weak scaling characterizes the parallel performance tuning the number of processors (and the system size) by fixing a average number of particles per processor. The Gustafson's law \cite{rf31reevaluating} indicates that an algorithm with a perfectly weak scaling has a speedup of $1$ and stationary running time for large scale systems. Alternatively, the strong scaling   remains the system size, and tunes  the number of CPU cores. Let $\eta$ be the core number and $T(\eta)$ is the corresponding run time.  The strong scaling  is defined by
\begin{equation}\label{eq:relativescalability}
	\eta(\lambda)=\dfrac{\lambda_{\min}}{\lambda}\cdot\dfrac{T_{\text{min}}}{T(\lambda)},
\end{equation}
where $T_{\text{min}}=T(\lambda_{\min})$ and $\lambda_{\min}$ is the minimal number of cores in the calculation.

The results of weak scalability on massive high-performance cluster with up to about one thousand cores are present in Fig.~\ref{fig:Nscaling}(CD). The weak scalability and the corresponding average CPU time per step show that the RBSOG achieves near-perfect weak scaling for both real space and Fourier space parts, as well as the total (i.e., real+Fourier) CPU cost, in spite that the average number of particles on one core is relatively small ($2703$ particles per core). The weak scaling of the Fourier component slightly deviates from 1 due to the increasing of the ratio between the communication cost to the total cost. Fig.\ref{fig:Strongscaling} (A-F) show the CPU performance and strong scaling of the simulation results by the RBSOG and the PPPM. When the core number is small, the RBSOG has almost the same performance as the PPPM. The RBSOG outperforms the PPPM with the increase of the core number, and the advangtage becomes significant for $\lambda>100$.  When   $\lambda=2048$, the total computational speed of the RBSOG is two order of magnitude faster than that of the PPPM. The strong scalability of the RBSOG remains over $90\%$ when $2048$ CPU cores are employed, significantly outperforming that of the PPPM which drops to ~ $1\%$ for the same system. These results demonstrate the great performance of the RBSOG in parallel scalability. It is mentionable that the RBSOG has $\sim 5\%$ improvement in the near-part calculation over that of the PPPM. This owes to the core-shell structured tabulation technique.

\section{Concluding remarks}\label{sec::conclusion}

We have developed a novel RBSOG algorithm that is accurate and efficient for the MD simulations with long-range Coulomb interactions which requires $\mathcal{O}(N)$ operations each step and less communication costs. The RBSOG is based on the SOG decomposition which splits the Coulomb kernel as two parts. One is an SOG which is long-ranged, and another is $1/r$ minus the SOG which is short-ranged. The resulting decomposition is smooth on the entire real axis, and is exact up to a cutoff radius $r_c$. The RBSOG builds the random mini-batch strategy into the Fourier space for the long-range forces, together with an importance sampling for the Fourier modes, so that it takes $\mathcal{O}(N)$ operations per time step. Discussions on infinite boundary term, error analysis, and implementation are provided. The all-atom simulations on bulk water and ionic liquid systems show that the RBSOG algorithm can quantitatively reproduce the spatiotemporal information and thermodynamic quantities, and shows attractive performance regarding the efficiency and scalability on massive supercomputer cluster. 

We perform the comparison of the RBSOG results with the PPPM, and demonstrating the advantages of the algorithm. It is noted that the RBE  has a similar performance as the RBSOG, and the systematical comparison between the RBE and the PPPM was reported \cite{liang2022superscalability}. We remark that in comparison with the Ewald splitting, the SOG decomposition has a better smoothness near the cutoff, and thus the truncation error can be reduced. Also, the force valence by the random batch sampling may be different between the two methods, and the RBSOG may achieve a smaller variance due to the possible use of a bigger bandwidth lowerbound as was pointed \cite{DEShaw2020JCP}. The investigation on these issues from both computational and theoretical points of view shall be conducted in the future. 

The RBSOG can be easily extended to other kernels. Different from the bilateral series approximation, one can achieve it by developing kernel-independent SOG methods such that the Gaussian bandwidths have controllable upperbound.  Incorporating such a SOG technique for MD simulations with non-Coulomb long-range kernels is also the goal of our subsequent works. 

\section*{Acknowledgements}
The authors acknowledge the financial support from the National Natural Science Foundation of China (grant No. 12071288), Science and Technology Commission of Shanghai Municipality (grant Nos. 20JC1414100 and 21JC1403700), and the support from the HPC center of Shanghai Jiao Tong University.

\appendix 
\section*{Appendix}

\section{Physical quantities calculated in  results}\label{appendix::calculation}
The RDF $g_{\text{rdf}}(r)$ at distance $r$ is defined by
\begin{equation}
	g_{\text{rdf}}=\dfrac{1}{N\rho_r}\sum_{i=1}^{N}\sum_{j\neq i}\dfrac{\langle\delta(r_{ij}-r)\rangle}{4\pi r^2}
\end{equation}
where $\delta(\cdot)$ is the Dirac delta function and the bracket represents the ensemble average.
The MSD $\eta_{\text{msd}}(t)$ at time $t$ is defined as
\begin{equation}
	\eta_{\text{msd}}(t)=\langle|\bm{r}(t+t_0)-\bm{r}(t_0)|^2\rangle
\end{equation}
with the bracket representing the ensemble average over $t_0$.
The VACF $\eta_{vacf}(t)$ at time $t$ is defined as
\begin{equation}
	\eta_{vacf}(t)=\langle\bm{v}(t_0)\bm{v}(t)\rangle
\end{equation}
with the bracket representing the ensemble average over $t_0$.
The total energy is defined as the sum of potential energy and kinetic energy. The potential energy is the sum of bond, angle, dihedral, improper, Coulomb, LJ and constrain components.

The viscosity is calculated by using the Green-Kubo relation
\begin{equation}
	C_{\eta}=\dfrac{V}{6k_{\text{B}}T}\sum_{\alpha\leq\beta}\int_{0}^{\infty}\langle\overline{P}_{\alpha\beta}(t)\cdot \overline{P}_{\alpha\beta}(0)\rangle dt
\end{equation}
where $V$ is the system volume, $T$ is the temperature, and $\overline{P}_{\alpha\beta}$ denotes an element $\alpha\beta$ of pressure tensor with $\alpha,\beta\in\{1,2,3\}$. Here the pressure is stored as a 6-element tensor, and is defined as follows:
\begin{equation}\label{eqeq::pressure}
	\overline{P}_{\alpha\beta}=\frac{1}{V} 
	 \sum_{i=1}^{N}( m_{i}v_{i,\alpha}v_{i,\beta} +  r_{i,\alpha}f_{i,\beta} ) 
\end{equation}
where $v_{i,\alpha}$, $r_{i,\alpha}$, and $f_{i,\alpha}$ are the $\alpha$-th component of the velocity $\bm{v}$, the position $\bm{r}$, and the force $\bm{f}$ of particle $i$, respectively. The two components in each term come from  the kenetic energy and the virial contributions, respectively. 
The heat conductivity, which is related to the ensemble average of the auto-correlation of the heat flux $\bm{J}$, is given via the Green-Kubo formula
\begin{equation}
	C_{\kappa}=\dfrac{V}{3k_{\text{B}}T^2}\int_{0}^{\infty}\langle \bm{J}(0)\cdot \bm{J}(t) \rangle dt.
\end{equation}
Here the heat flux $\bm{J}$ is defined as
\begin{equation}
	\bm{J}=\dfrac{1}{V}\left[\sum_{i=1}^{N}\overline{e}_{i}\bm{v}_i+\dfrac{1}{2}\sum_{\substack{i,j=1\\i<j}}^{N}(\bm{f}_{ij}\cdot(\bm{v}_i+\bm{v}_j))\bm{r}_{ij}\right]
\end{equation}
where $\overline{e}_{i}$ is the per-atom energy (potential and kinetic).

%\bibliographystyle{siam} %{elsart-num} %{elsart-num-sort} {plain}  {siam}%
%\bibliography{refer}

\end{document}